\newtheorem{theorem}{Theorem} 
\newtheorem{definition}{Definition}
\newtheorem{example}{Example}
\newtheorem{lemma}{Lemma}
\newtheorem{assumption}{Assumption}
\newif\ifuseboldmathops
\newif\ifuseittextabbrevs
	\newcommand{\ie}{{\it i.e.}}
	\newcommand{\ie}{i.e.~}
\newcommand{\obs}{O}
\newcommand{\dist}[1]{\mathcal{D}(#1)}
\newcommand{\calA}{\mathcal{A}}
\newcommand{\supp}{\mathsf{Supp}}
\newcommand{\plays}{\mathsf{Plays}}
\newcommand{\prefplays}{\mathsf{PrefPlays}}
\newcommand{\calAP}{\mathcal{AP}}
\newcommand{\winplays}{\mathsf{WPlays}}
\newcommand{\opaquewinplays}{\mathsf{OWPlays}}
\newcommand{\revealwinplays}{\mathsf{RWPlays}}
\newcommand{\win}{\mathsf{Win}}
\newcommand{\Eventually}{\lozenge \, }
\newcommand{\init}{{\iota}}
\acrodef{mdp}[MDP]{Markov decision process}
\acrodef{asw}[ASW]{Almost-Sure Winning}
\acrodef{ltlf}[LTL$_f$]{Linear Temporal Logic over Finite Traces}
\acrodef{ltl}[LTL]{Linear Temporal Logic}
\acrodef{scltl}[co-safe LTL]{syntactically co-safe Linear Temporal Logic}
\acrodef{dfa}[DFA]{Deterministic Finite Automaton}
\acrodef{des}[DES]{Discrete Event System}
\title{\LARGE \bf
 Synthesis of Opacity-Enforcing Winning Strategies Against Colluded Opponent}
\author{Chongyang Shi, Abhishek N. Kulkarni, Hazhar Rahmani, and Jie Fu$^\ast$% <-this % stops a space
%\thanks{This material is based upon work supported by the Air Force Office of Scientific Research under award
% number FA9550-21-1-0085.}% <-this % stops a space
\thanks{The authors are with the Department of Electrical and Computer
Engineering, University of Florida, Gainesville, FL 32605, USA.
        {\tt\small \{c.shi, a.kulkarni2, h.rahmani, fujie\}@ufl.edu }.}%
 \thanks{$\ast$: the corresponding author.}
%P. Misra is with the Department of Electrical Engineering, Wright State University,
%         Dayton, OH 45435, USA
%         {\tt\small pmisra@cs.wright.edu}}%
}
\begin{document}

\maketitle
\begin{abstract}
    This paper studies a language-based opacity enforcement in a two-player, zero-sum game on a graph. In this game, player 1 (P1) wins if it can achieve a secret temporal goal described by the language of a finite automaton, no matter what strategy the opponent player 2 (P2) selects. In addition, P1 aims to win while making its goal opaque to a passive observer with imperfect information.
    However, P2 colludes with the observer to reveal P1's secret whenever P2 cannot prevent P1 from achieving its goal, and therefore, opacity must be enforced against P2. We show that a winning and opacity-enforcing strategy for P1 can be computed by reducing the problem to solving a reachability game augmented with observer's belief states. Furthermore, if such a strategy does not exist, winning for P1 must entail the price of revealing his secret to the observer. We demonstrate our game-theoretic solution of opacity-enforcement control through a small illustrative example and 
    in a robot motion planning problem.
    %
    % Consider a turn-based deterministic game between two players P1 and P2 where P1 aims to achieve a temporally-extended goal.
    % %
    % The game is observed by a passive observer who knows the game's model but to whom, P1's temporal goal is assumed to be a secret.
    % %
    % P1's objective is to achieve the temporal goal while making its secret \emph{non-referable} to P3. 
    % %
    % In this work, non-referability is achieved by enforcing \emph{opacity}, that is, by directing the game to go through only those plays that satisfy the goal but have \emph{observation-equivalent} plays that do not satisfy the temporal goal.
    % %
    % %In this manner, the observer cannot tell if the played game was within the temporal goal or not, making it difficult to learn the temporal goal.
    % %
    % P2's objective is to avoid P1 to achieve the temporal goal if possible, and if not, to direct the game so P3's knows the played game satisfied the temporal goal.  
    % %
    % For this setting, we solve the problem of computing a opacity-enforcing strategy.
    % %
    % The idea is the construction of product automaton, we termed the \emph{belief-based} game, that encodes how the state of the game along with P3's belief about the game evolves in time.
    % %
    % We illustrate our problem and solution using a running example and a case study, which we analyze in detail.
\end{abstract}
 
\section{Introduction}

Non-interference or opacity is a security and privacy property that evaluates whether an observer (intruder) can infer a secret of a system by observing its behavior. The secret could be a language generated by the system (language-based opacity)~\cite{dubreil2008opacity,lin2011opacity,wu2013comparative}; a system state---such as initial, current, or final state---(state-based opacity)~\cite{bryans2005modelling,bryans2008opacity,saboori2007notions,sabooriCurrentStateOpacityFormulations2014,wu2013comparative}; or some past state ($k$-step and infinite step opacity)~\cite{saboori2007notions,saboori2009verification,wu2013comparative}.   
Recent work by Wintenberg \emph{et al.}~\cite{wintenberg2022general} unifies those notions of opacity for discrete event systems and provides transformations from state-based notions of opacity to language-based notions.
Enforcing opacity using supervisory control has been extensively studied (see the survey by Lafortune \emph{et al.}~\cite{lafortuneHistoryDiagnosabilityOpacity2018}, the overview by Jacob \emph{et al.}~\cite{jacob2016overview}, and chapters in the book by Hadjicostis~\cite{hadjicostis2020estimation}).  Besides discrete event system model and supervisory control approach, 
  opacity verification and enforcement in Petri net systems have been studied \cite{bryans2005modelling,bryans2008opacity} and optimization-based approach  are developed to enforce opacity in Petri net systems \cite{congOnlineVerificationCurrentstate2018,basileNoninterferenceEnforcementSupervisory2021,basileOptimizationbasedApproachAssess2022}.

This paper investigates a game-theoretic approach to enforce language-based opacity against an opponent who colludes with the passive observer. 
Our problem formulation is motivated by security applications of cyber-physical systems and robotics: Consider a robot (referred to as player 1/P1) aims to accomplish a high-confidential task in a dynamic environment. Besides task complition, the robot must ensure a passive observer cannot infer if the task is accomplished, based on the observer's partial information. However, there are uncontrollable events or other agents in the operational environment. 
In the worst case scenario, the uncontrollable environment (referred to as player 2/P2) may act in a way that forces the robot to reveal its secret. The question is, does the robot have an opacity-enforcing strategy to satisfy the secret surely, against a colluding P2?

The connection between reactive synthesis and supervisory control 
has been  studied by Rüdiger \emph{et al.}~\cite{ehlersSupervisoryControlReactive2017}, who show the non-blocking supervisory control problem can be reduced to reactive synthesis of a maximal permissive strategy. In light of this connection, we aim  to establish another connection of opacity-enforcing supervisory control with reactive synthesis using solutions of games on graphs \cite{zielonkaInfiniteGamesFinitely1998}.
Our game formulation is closely related to that by Hélouët \emph{et al.}~\cite{helouetOpacityPowerfulAttackers2018}, who develop a game-theoretic controller that enforces state-based opacity against different types of attackers. They employ a turn-based safety game where the goal of the controller is to avoid reaching a state at which the attacker's belief is a subset of the secret states. The safety objective   can be satisfied in two ways: Either a path/execution  never reaches a secret state, or a path that reaches a secret state but is observation-equivalent to a path that does not reach a secret state.
  Different from  \cite{helouetOpacityPowerfulAttackers2018}, in our game,  P1 aims to ensure satisfying the secret (a language accepted by an automaton) in an uncontrollable environment modeled as player 2,  while enforcing opacity from a passive observer.  We show the opacity-enforcing control can be formulated as a reachability game whose states augments the observer's belief. 
  Based on the determinancy of turn-based games \cite{gradelAutomataLogicsInfinite2002,martinBorelDeterminacy1975}, we prove  that if, for a given state, P1 can satisfy the secret from but cannot enforce its opacity with probability one, then for any secret-satisfying(called winning) strategy of P1,  any counter-strategy of P2 will surely reveal the secret of P1 to the observer. 
  This fact holds regardless if P1 or P2 uses a deterministic/non-deterministic/randomized   strategy.
We illustrate the proposed synthesis method using a small game on graph example and a robot motion planning problem in an adversarial environment.

\section{Preliminaries and Problem Formulation}
\subsection{Preliminaries}
\paragraph*{Notations} Let $X$ be a finite set. We use $\dist{X}$ to denote the set of all probability distributions over $X$. 
For any distribution $d \in \dist{X}$, the set of elements in $X$ with non-zero probability under $d$ is denoted $\supp(d)$, that is, $\supp(d)=\{x\in X\mid d(x)>0\}$. 

% \hr{We should decide to use one of these three notations: (1) "player $P_i$", (2) "Player P$_i$", or (3) "Player Pi".}
% \hr{
% Notation (1) somehow indicates that $P_i$ is a math variable. 
% %
% Notations (2) and (3) both indicate that the players are not math variables but rather special names for known objects ("math constants").
% %
% Notation (2) will give us the ability to make general sentences like "For $i \in \{1, 2\}$, Player P$_i$ ..."
% }

% \jf{suggestion to follow 3), which is typical in game theory. Also you can always say, for $i\in \{1,2\}$, player $i$}
We model the interaction between a controllable agent, player 1(P1), and an uncontrollable environment, player 2(P2) using a turn-based game arena defined as follows.
\begin{definition}
\label{def:arena}
The \emph{transition system} (arena) of a two-player, turn-based, deterministic game on a graph is a tuple
\[
G =  \langle S := S_1 \cup S_2, A := A_1 \cup A_2, T, s_0, \calAP, L  \rangle
\]
in which
\begin{itemize}
        \item $S = S_1 \cup S_2$ is a finite set of  states partitioned into $S_1$ and $S_2$ where at a state in $S_1$, P1 chooses an action, and at a state in $S_2$, P2 selects an action;
        
        \item $A = A_1 \cup A_2$ is the set of actions, where $A_1$ (resp., $A_2$) is the set of actions for P1 (resp., P2); 
        
        \item $T : (S_1 \times A_1)\cup (S_2 \times A_2) \rightarrow S$ is a \emph{deterministic} transition function that maps a state-action pair $(s, a) \in S_i \times A_i$ to a next state $s' \in S$ for all $i \in \{1, 2\}$;
        
        \item $s_0 \in S$ is the initial state; 
        \item $\calAP$ is the set of atomic propositions; and
        \item $L: S\rightarrow 2^{\calAP}$ is the labeling function that maps a state to a set of atomic propositions that evaluate true at that state. 
        \end{itemize}
\end{definition}

A finite play $\rho = s_0 a_0 s_1 a_1 s_2\ldots s_n $ is a sequence of interleaving states and players' actions such that $T(s_i,a_i) = s_{i+1}$ for all integers $0 \leq i \leq n-1$. 
The labeling of the play $\rho$, denoted $L(\rho)$, is defined as $L(\rho) = L(s_0)L(s_1)\ldots L(s_n)$. That is, the labeling function omits the actions from the play and applies to states only. 
We use $\plays(G) \subseteq (S\times A)^\ast S$ to denote the set of finite plays that can be generated from the game $G$ and we use $\prefplays(G)$ to denote the set of prefixes\footnote{A word $u$ is a prefix of word $w$ if and only if there exists word $v$ that $w=u\cdot v$, where $\cdot$ is the symbol for concatenation.} of the finite plays within $\plays(G)$.
A  randomized (resp. deterministic) strategy of player $i$ is  a function $\pi_i: \prefplays(G)\rightarrow \dist{A_i}$ (resp. $\pi_i: \prefplays(G)\rightarrow {A_i}$) that maps a prefix/history of a play into a distribution over actions (resp. a single action for player $i$).
In our notation, $\plays(\rho, \pi_1,\pi_2) $ is the set of possible plays that can be generated (with a non-zero probability) when P1 and P2 follow the strategy profile $(\pi_1, \pi_2)$ starting from the play $\rho$.
Formally, $\rho' \in \plays(\rho,\pi_1,\pi_2) $ if and only if $\rho  $ is a prefix of $\rho'$, \ie, $\rho' =\rho \cdot s_ka_{k}s_{k+1}\ldots s_n$ where if $s_j \in S_i$ then $a_j\in \supp(\pi_i(\rho \cdot s_k a_k\ldots s_j))$ and $T(s_j,a_j) = s_{j+1}$, for all $k \leq j < n$ and $i\in \{1,2\}$.
%
% \hr{In above, the strategy is defined to be history-based, but in this sentence the strategy is assumed to be memoryless.
% %
% Because we have a temporal objective, we might be able to cite a paper to say that temporal objectors are reduced to reachability objectives, and therefore, memoryless strategies suffice for our problem.
% }
% \jf{corrected it to be history dependent.}
%
%We use this notation of 'for player $i \in \{1, 2\}$'.
%
For $i \in \{1, 2 \}$, the set of all randomized strategies for player $i$ is denoted $\Pi_i$.

\paragraph{P1' temporal objective}
In the game arena, Player P1 (pronouns he/him/his) intends to achieve a temporal objective $\varphi \subseteq (2^\calAP)^\ast$, while Player P2 (pronouns she/her) aims to prevent P1 from achieving that objective. 

 The set of words satisfying the temporal objective $\varphi$ is equivalently represented by the language of  a \ac{dfa}\footnote{For linear temporal logic over finite traces, the formula can be represented as \ac{dfa}s. We omit the introduction of temporal logic for conciseness.}.

 \begin{definition}[\ac{dfa}]
 A \ac{dfa} is a tuple $\mathcal{A} = ( Q, \Sigma, \delta, \init, F )$ in which (1) $Q$ is the set of states; (2) $\Sigma$ is the alphabet (set of input symbols); (3) $\delta: Q\times \Sigma \rightarrow Q$ is a deterministic transition function and is complete \footnote{For any $Q\times \Sigma$, $\delta(q,\sigma)$ is defined. An incomplete transition function can be completed by adding a sink state and redirecting all undefined transitions to that sink state.}; (4) $\init$ is the initial state; and (5) $F \subseteq Q$ is the set of accepting states. 
\end{definition}

The transition function $\delta$ is extended as $\delta(q, \sigma \cdot w) = \delta(\delta(q,\sigma), w)$. %
A word $w = w_0 w_1 \ldots w_n \in \Sigma^\ast$ is accepted by $\calA$ if and only if $\delta(\init, w)\in F$. 
The set of words accepted by $\calA$ is called the language of $\calA$, denoted by $\mathcal{L}(\calA)$.
Formally, $\mathcal{L}(\calA) = \{ w \in \Sigma^* \mid \delta(q_0, w) \in F \}$.

For notation simplicity, let $\Sigma\coloneqq 2^\calAP$.  The \ac{dfa} that determines P1's winning plays is defined as follows.
\begin{definition}
    Let $\mathcal{A} = ( Q, \Sigma, \delta, \init, F )$ be a DFA that specifies P1's temporal objective, i.e., $\varphi = \mathcal{L}(A)$. A play $\rho \in \plays(G)$ is called \emph{winning} for P1 if $L(\rho) \in \mathcal{L}(\mathcal{A})$. The set of all winning plays for P1 is denoted $\winplays_1$, i.e.,
\[
\winplays_1 = \{\rho \in \plays(G) \mid L(\rho) \in \mathcal{L}(\calA)  \}
\] 
\end{definition}

% Accordingly, P1's temporal objective is specified using a DFA with alphabet 
% %
% Each word accepted by the DFA indicates the label of a desired play.
% %
% In the rest of the paper, we assume $\Sigma = 2^\calAP$.
%

 In this paper, we assume the DFA that specifies the temporal goal has a specific structure.
\begin{assumption}
    All the accepting states of the DFA specifying the temporal goal of P1 are absorbing, that is, for each $q \in F$ and $\sigma \in \Sigma$, $\delta(q, \sigma) = q$.
\end{assumption}
%
% \deleted{We make this assumption because the game ends when and only when the temporal objective is achieved.}
% %
% \begin{assumption}
%   \deleted{  Let $\rho = s_0 a_0 s_1 a_1 s_2\ldots s_n $ be a played game such that for all $0 \leq i \leq n-1$, $L(s_0 a_0 \cdots s_i) \notin \mathcal{L}(\calA)$. The game ends when and only when $L(\rho) \in \mathcal{L}(\calA)$. }
% \end{assumption}
 
% \jf{I don't think we need the assumption. The game ends whenever a sink state is reached. A sink may not be final in the DFA}

% \color{red}
%  For each $i \in \{1, 2 \}$, player $P_i$'s choosing of actions is governed by a randomized,  memoryless strategy $\pi_i: S_i \rightarrow A_i$, where for each state $s \in S_i$, when $P_i$ enters $s_0$ it performs an action $a$ chosen randomly based on $\pi_i(s)$ from those states for which $\pi_i(s)(.) > 0$.
%  %
%  The set of all randomized, memoryless policies for for player $P_i$ is denoted $\Pi_i$.
 
% We use $\plays(\pi_1, \pi_2)$ to denote the set of all possible plays that can be generated by P1 and P2 when they follow the strategy profile $(\pi_1, \pi_2)$.
%  %
%  The objective of P1 is to use a strategy that ensures P1 will win the game.
%  %

 \begin{definition}
 Given a history $\rho \in \prefplays(G)$, 
     a strategy $\pi_1 \in \Pi_1$ is  \emph{winning} for P1 starting from $\rho$ if for any $\pi_2 \in \Pi_2$, $\plays(\rho, \pi_1, \pi_2) \subseteq \winplays_1$. 
     A strategy $\pi_2 \in \Pi_2$ is  \emph{winning} for P2 if for any $\pi_1 \in \Pi_1$, $\plays(\rho, \pi_1, \pi_2) \cap \winplays_1 = \emptyset$.
 \end{definition}

\subsection{Problem formulation}
 In this two-player game with temporal objective, we consider one single attacker of the system who is assumed to have full knowledge about the game arena but can only partially observe the plays in the game. The attacker's task is to infer if a play satisfy a secret property using his partial observations.
 
\begin{definition}[Attacker's observation function]
Given a finite set of observations $\Omega$, the attacker's observation function is a function $O: S\times A\times S \rightarrow \Omega$ that maps each transition $(s, a, s')$ to an observation the attacker receives when the transition system takes the transition $(s, a, s')$. The observation function naturally extends to plays: For each $\rho =s_0a_0s_1a_1\ldots s_n \in \plays(G)$, $O(\rho) = O(s_0,a_0,s_1)O(s_1,a_1,s_2)\ldots O(s_{n-1},a_{n-1},s_n) $.
Two plays $\rho_1,\rho_2$ are observation-equivalent if and only if $\obs(\rho_1)=\obs(\rho_2)$. We denote by $[\rho]$ the set of observation-equivalent plays of $\rho$. 
\end{definition}

Note this observation function is general enough to capture partial observations for both actions and states and also allows the action observations to be state-dependent. For example, we can capture the case when the attacker can observe P2's actions but not P1's actions.

The following information structure is considered:
\begin{itemize}
    \item P1 and P2 have perfect observations.
    \item The observation function of the attacker is a common knowledge to all players (P1, P2, and the attacker).
    \item There is no direct communication from P2 to the attacker. That is, P2 cannot inform the attacker if the secret is satisfied or not.
\end{itemize} 

To define the attacker's objective, we adopt the language-based opacity~\cite{lin2011opacity}. 
\begin{definition}[Language-based opacity]
%The secret $\varphi$ is an \ac{ltlf} formula.  
Given a temporal objective $\varphi$, called the secret of P1, the secret $\varphi$ is \emph{opaque} with respect to a play $\rho \in \plays(G)$ if and only if \begin{inparaenum}
\item $L(\rho) \in \varphi$; and 
\item there exists at least one observation-equivalent play $\rho' \in [\rho] $ such that $L(\rho') \not\in \varphi$.
\end{inparaenum}
\end{definition}

%\begin{assumption} $L(\varphi) \cap L(\varphi_S) \ne \emptyset$. 
%\end{assumption}
%This assumption is to make sure the problem is nontrivial. If the intersection is empty, then P1 can achieve the objective without satisfying the secret property. 

%Because the intersection is non-empty, it is possible that, when P1 aims to achieve his task, during the interaction with P2, the secret $\varphi_S$ is satisfied. 

%We consider that P2 is colluded with the attacker. In addition to P2's objective which is to prevent P1 from satisfying the objective $\varphi$, when P2 sees that $\varphi_S$ is satisfied,  P2 will react in a way trying to reveal this secret to the attacker. That being said, one may think an opacity-enforcing controller is to satisfy $\varphi\land \neg \varphi_S$. The answer is yes only if P1 has a winning strategy for that. If P1 cannot ensure to satisfy $\varphi$ without satisfying $\varphi_S$, then when the secret property is satisfied, P1 must ensure it is not revealed, against the colluded P2.
 Using this notion of opacity, we can identity the set of winning and opaque plays as follows. 
\begin{definition} 
P1's \emph{opacity-enforcing winning} plays is a set $\opaquewinplays_1$ of plays such that 
\[
\opaquewinplays_1 = \{\rho \in \winplays_1 \mid \exists \rho' \in [\rho], L(\rho') \not\in \mathcal{L}(\mathcal{A})\}.
\] P1's \emph{secret-revealing winning} plays is a set $\revealwinplays_1$ of plays such that 
\[
\revealwinplays_1 = \{\rho \in \winplays_1 \mid  \forall \rho' \in [\rho], L(\rho') \in \mathcal{L}(\mathcal{A}) \}
\] 
\end{definition}
 Clearly, $\revealwinplays_1\cap \opaquewinplays_1=\emptyset$.

We consider that P2 is colluded with the attacker. In addition to P2's objective which is to prevent P1 from satisfying the objective $\varphi$, when $\varphi$ is satisfied, P2 will react in a way trying to reveal this secret to the attacker. 
\begin{definition}
Given a prefix $\rho_0\in \prefplays(G)$,
a strategy $\pi_1 \in \Pi_1$ is   \emph{opacity-enforcing winning}   if for any strategy $\pi_2 \in \Pi_2$, $\plays(\rho_0, \pi_1, \pi_2) \subseteq\opaquewinplays_1$. 
A strategy $\pi_1\in \Pi_1$ is called \emph{winning and positive opacity-enforcing} if for any strategy $\pi_2\in \Pi_2$, $\plays(\rho_0, \pi_1, \pi_2) \subseteq\winplays_1$ and $\Pr^{\rho_0, \pi_1,\pi_2}(\rho \in \opaquewinplays_1)>0$ and $\Pr^{\rho_0, \pi_1,\pi_2}(\rho \in \winplays_1 \setminus \opaquewinplays_1)>0$ where $\Pr^{\rho_0, \pi_1,\pi_2}$ is the probabilistic distribution induced by the strategy profile $(\pi_1,\pi_2)$ given a prefix $\rho_0\in \prefplays(G)$. 
% \deleted{Similarly,
% a strategy $\pi_2 \in \Pi_2$ is   \emph{secret-revealing}   if for any strategy $\pi_1 \in \Pi_1$, for any $\rho \in \plays(\rho_0, \pi_1, \pi_2) $, if $\rho \in \winplays_1$ then $\rho \in \revealwinplays_1$.}
\end{definition}

The notion of \emph{positive opacity-enforcing} is related to \emph{probabilistic opacity} and is relevant  when the distribution of plays induced by players' strategies is described by a stochastic process. Note that though the game arena is deterministic, we do not restrict players' strategies to be deterministic.

%
%If that is not possible, then if $\varphi$ can be satisfied, it wants to guarantee that $\varphi$ is satisfied, and otherwise, it want to make sure that 
%it is not revealed to $P_3$ that $\varphi$ was not satisfied.

%
%That being said, one may think an opacity-enforcing controller is to satisfy $\varphi\land \neg \varphi_S$. The answer is yes only if P1 has a winning strategy for that. If P1 cannot ensure to satisfy $\varphi$ without satisfying $\varphi_S$, then when the secret property is satisfied, P1 must ensure it is not revealed, against the colluded P2.
%But, if it cannot ensure to satisfy $\varphi$ without satisfying $\varphi_S$, then when the secret property is satisfied, P1 must ensure it is not revealed, against the colluded P2.

%\jf{Use $\ast$ instead of $*$}
Having the above definitions, we now present our problem statement.
\problem
Given a turn-based, deterministic game arena $G =  \langle S := S_1 \cup S_2, A := A_1 \cup A_2, T, s_0, \calAP, L  \rangle$, an \ac{dfa}  $\mathcal{A} = ( Q, \Sigma, \delta, \init, F )$ describing P1's secret temporal goal, and an observation function $O: S\times A\times S \rightarrow \Omega$ describing an attacker's partial observations, for a given prefix $\rho_0$ from which P1 can ensure a winning play in $\winplays$, compute an opacity-enforcing winning strategy for P1, if exists. 
Otherwise, 
determine whether P1 has a winning and positive opacity-enforcing strategy.

\section{Main Results}
In this section, we describe our algorithm to solve an opacity-enforcing winning strategy for P1 through the construction of a belief-augmented game. Then we prove the correctness and completeness of the algorithm. In the end, we prove that P1 does not have a winning and positive opacity-enforcing  strategy in the turn-based deterministic game arena regardless if deterministic/non-deterministic/randomized strategies can be used.
%An \ac{ltlf} formula can be equivalently represented by a \ac{dfa}.
%

 \begin{definition}[Belief-augmented game  arena]
\label{def:belief-augmented-arena}
Given the two-player, turn-based game arena $G =  \langle S := S_1 \cup S_2, A := A_1 \cup A_2, T, s_0, \calAP, L  \rangle$, and a \ac{dfa} $\mathcal{A} = (Q,2^\calAP, \delta, \init, F)$ describing P1's   secret objective $\varphi$, 
%Assumption~\ref{assume:task-is-secret}, 
the belief-augmented game arena is a tuple 
\[
\mathcal{G} = (V := V_1 \cup V_2, A_1\cup A_2, \Delta, v_0)
\]
in which
\begin{itemize}
    \item   $V = S\times Q \times 2^{S\times Q}$ is the state space, partitioned into P1's states $V_1 =  S_1\times Q\times 2^{S\times Q}$ and P2's states $V_2= S_2\times Q\times 2^{S\times Q}$, where each state $(s,q,b)$ includes a game state $s\in S$, an automaton state $q\in Q$, and a belief state $b\subseteq S\times Q$ of the attacker;
    \item $A_1\cup A_2$ are the players' actions, same as in $G$;
    \item $\Delta: V \times (A_1\cup A_2) \rightarrow V$ is the transition function  such for a state $(s,q,b)\in V_i$, 
    and an action $a \in A_i$, 
    \[
    \Delta((s,q,b), a) = (s', q', b')
    \]
    where $s'=T(s,a)$, $q'=\delta(q, L(s'))$, and $b'= \{(s^o,q^o) \mid \exists (\bar s, \bar q)\in b, \exists \bar a \in A: \; \obs(\bar s, \bar a, s^o) = \obs(s, a, s') \text{ and } T(\bar s, \bar a)= s^o \text{ and } \delta(\bar q, L(s^o))= q^o\}$
\item $v_0 = (s_0,q_0,b_0)$ where $q_0 = \delta(\init, L(s_0))$ and $b_0  = \{(s_0,q_0)\}$, is the initial belief of the observer. 
% \hr{I think $[s_0]$ is not defined. Do we need it? I mean if the attacker knows the transition system, then he also knows what state is the initial state, and in this case, $b_0 = \{ {s_0, q_0}\}$.}
\end{itemize}
\end{definition} 
%\added{Though we assume the attacker knows the initial state of the game, this assumption can be easily lifted and $b_0$ include all states that generates the same initial observations to the attacker.}

Each state of this belief-augmented game is a tuple $(s, q, b)$ in which $s$ indicates the current state of the original arena, $q$ indicates the state of the specification DFA, to which the DFA reaches by tracing the label sequence of the current play, and $b$ is the belief of the attacker about the current state of the arena and the current state of the specification DFA.
%
% \jf{the following is described precisely in the one-to-one mapping.}
% \deleted{
% Transitions of this product automaton are constructed based on the transitions functions of the original arena and the DFA as well as based on how the beleif of the attacker evolves. 
% } 
All definitions related to  game arena in Def.~\ref{def:arena}, including those of a play and a strategy, are applicable to the belief-augmented game arena $\mathcal{G}$.

We can establish a one-to-one mapping $\mathfrak{R}: \plays(G)\rightarrow \plays(\mathcal{G})$ between plays in the original game $G$ and plays in the belief-augmented game $\mathcal{G}$ as follows:
For a play $\rho= s_0a_0s_1a_1\ldots s_n \in \plays(G)$, there is a unique play $\mathbf{p} = (s_0, q_0, b_0)a_0(s_1, q_1, b_1) \ldots (s_n, q_n, b_n) \in \plays(\mathcal{G})$
 where $q_0=\delta(\init, L(s_0))$ and $q_i  = \delta(q_{i-1}, L(s_{i-1}))$ for any $1\le i <n$. The belief state $b_i$ is constructed according to Definition~\ref{def:belief-augmented-arena} for $0 \le i \le n$. By construction, for a given belief $b_i$, the observation $\obs(s_i,a_i,s_{i+1})$, the new belief $b_{i+1}$ is uniquely determined \footnote{This is because the observation function is deterministic. This is not the case when the observation function becomes probabilistic.}. %  We introduce a mapping $\mathfrak{R}: \plays(G)\rightarrow \plays(\mathcal{G})$ that maps a play in $G$ to its corresponding play in $\mathcal{G}$. 

The following property can be shown.

\begin{lemma}
\label{lma:belief-property1}
Let $\mathbf{p} = (s_0, q_0, b_0)a_0(s_1, q_1, b_1) \ldots (s_n, q_n, b_n)$ be a play over $\mathcal{G}$. For all $i = 0, \dots, n$, it holds $(s_i,q_i)\in b_i$.
\end{lemma}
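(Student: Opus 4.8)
The plan is to prove this by induction on the index $i$, using the recursive definition of the belief update $b' = \{(s^o,q^o) \mid \exists (\bar s,\bar q)\in b,\, \exists \bar a\in A:\ \obs(\bar s,\bar a,s^o)=\obs(s,a,s')\ \text{and}\ T(\bar s,\bar a)=s^o\ \text{and}\ \delta(\bar q,L(s^o))=q^o\}$ from Definition~\ref{def:belief-augmented-arena}. The key observation is that the ``witness'' demonstrating membership of $(s_{i+1},q_{i+1})$ in $b_{i+1}$ can always be taken to be the actual play element $(s_i,q_i)$ together with the actual action $a_i$.

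\emph{Base case.} For $i=0$, by definition of the initial state $v_0=(s_0,q_0,b_0)$ we have $b_0=\{(s_0,q_0)\}$, so trivially $(s_0,q_0)\in b_0$.

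\emph{Inductive step.} Assume $(s_i,q_i)\in b_i$ for some $i$ with $0\le i<n$. The play $\mathbf{p}$ is a play over $\mathcal{G}$, so $\Delta((s_i,q_i,b_i),a_i)=(s_{i+1},q_{i+1},b_{i+1})$; in particular $s_{i+1}=T(s_i,a_i)$ and $q_{i+1}=\delta(q_i,L(s_{i+1}))$. Now apply the belief-update formula with the choice $(\bar s,\bar q)=(s_i,q_i)$ (which lies in $b_i$ by the induction hypothesis) and $\bar a=a_i$. Then $\obs(\bar s,\bar a,s_{i+1})=\obs(s_i,a_i,s_{i+1})$ (reflexivity), $T(\bar s,\bar a)=T(s_i,a_i)=s_{i+1}$, and $\delta(\bar q,L(s_{i+1}))=\delta(q_i,L(s_{i+1}))=q_{i+1}$. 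Hence $(s_{i+1},q_{i+1})$ satisfies the defining condition for membership in $b_{i+1}$, so $(s_{i+1},q_{i+1})\in b_{i+1}$. By induction the claim holds for all $i=0,\dots,n$.

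I do not anticipate a genuine obstacle here: the statement is essentially an unwinding of the definitions, and the only thing to be careful about is making sure the action $\bar a$ is quantified over all of $A$ in the belief update (so that choosing $\bar a=a_i$ is legitimate even when $a_i\in A_1$ and the transition is a P1 move, or vice versa) and that the DFA component is tracked consistently with the convention $q_{i+1}=\delta(q_i,L(s_{i+1}))$ used in the mapping $\mathfrak{R}$. The induction is driven entirely by the self-witnessing property of the belief operator.
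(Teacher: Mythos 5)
Your proof is correct and follows essentially the same route as the paper's: induction on the index, with the base case $b_0=\{(s_0,q_0)\}$ and the inductive step using $(s_i,q_i)\in b_i$ together with $\bar a = a_i$ as the self-witness in the belief-update set. If anything, your handling of the DFA component ($q_{i+1}=\delta(q_i,L(s_{i+1}))$, matching the $L(s^o)$ in the belief-update formula) is slightly more careful than the paper's, which writes $\delta(q_k,L(s_k))=q_{k+1}$ at that step.
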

\begin{proof}
 We prove by induction on the length of $\mathbf{p}$. For $i = 0$, $b_0 = (s_0,q_0)$ by construction. %\replaced{by Definition~\ref{def:belief-augmented-arena}}{recall that} $b_0  = \{(s,q)\mid s\in [s_0], \delta (\init, L(s))= q\}$. \replaced{Accordingly,}{It is clear that} $s_0 \in [s_0]$ and $\delta(\init, L(s_0)) = q_0$. Hence, $(s_0,q_0) \in b_0$.

Assume that $(s_k, q_k) \in b_k$ where $1 \leq k \leq n - 1$. Then by Def.~\ref{def:belief-augmented-arena}, for $i = k+1$, $b_{k + 1} = \{(s^o,q^o)\mid \exists (\bar s,\bar q)\in b_k, \bar a \in A: \obs(\bar s, \bar a, s^o) = \obs(s_k,a_k,s_{k+1}) \text{ and } T(\bar s, \bar a)= s^o \text{ and } \delta(\bar q, L(s^o))= q^o\}$. Since $(s_k, q_k) \in b_k$ and there is $\bar a = a_{k} \in A$ such that $s_{k + 1} = T(s_k, \bar a)$ and $\obs(s_k, \bar a, s_{k+1}) = \obs(s_k, a_k, s_{k+1})$, and that $\delta(q_k, L(s_k))= q_{k + 1}$, it holds that $(s_{k+1}, q_{k+1}) \in b_{k+1}$. Therefore, $(s_i,q_i) \in b_i$ for $i = 0,\dots,n$.
\end{proof}

This lemma %, which partially testifies the correctness of our construction in Def.~\ref{def:belief-augmented-arena}, 
states that the attacker's belief always contains the true state of the game and the current state of the specification DFA on tracing the label of the play.

Next, we show that two observation-equivalent plays yield, at each instant, equal beliefs about the status of the game.

\begin{lemma}
\label{lma:belief-property2}
Given a play $\rho = s_0a_0s_1a_1s_2 \dots s_n \in \plays(G)$, for any of its observation-equivalent play $\rho' \in [\rho]$, assuming $$\mathfrak{R}(\rho) = (s_0, q_0, b_0)a_0(s_1, q_1, b_1) \ldots (s_n, q_n, b_n)$$ and  $$\mathfrak{R}(\rho') = (s_0',q_0',b_0')a_0'(s_1',q_1', b_1') \dots (s_n', q_n', b_n'),$$ it holds that $b_i'  = b_i$ for all $0 \leq i \leq n$. 
% with the state $(s_n,q_n,b_n) \in V$ where $b_n \subseteq S \times Q_S$. The set $Q_S$ is the automaton state space of \ac{dfa} $\mathcal{A}_S = (Q_S, \Sigma_S, \delta_S, \init_S, F_S)$ describing P1’s secret $\varphi_S$. The automaton $\mathcal{A}_S$ induces states pairs $(s_0,p_0), (s_1,p_1), \dots, (s_n,p_n)$. For any $\rho^i \in [\rho]$, there are induced states pairs $(s_0^i,p_0^i), (s_1^i,p_1^i), \dots, (s_n^i,p_n^i)$. The following relation between observation class $[\rho]$ and belief state $b$ holds.
% \[
% \forall \rho^i \in [\rho] , (s_n^i, p_n^i) \in b
% \]
% and
% $
% \forall (s, p) \in b, \exists \rho^i \in [\rho]
% $
% that $(s,p)$ is the end of $\rho^i$.
\end{lemma}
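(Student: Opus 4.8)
The plan is an induction on the index $i$, whose engine is an observation already implicit in the construction of $\mathfrak{R}$ (and flagged by the footnote after Definition~\ref{def:belief-augmented-arena}): the belief component $b_{i+1}$ of a play in $\mathcal{G}$ is a \emph{deterministic function} of the pair $(b_i,\, \obs(s_i,a_i,s_{i+1}))$ alone, where $\obs(s_i,a_i,s_{i+1})$ is the observation emitted on the $i$-th transition. Concretely, reading off the belief-update rule of Definition~\ref{def:belief-augmented-arena},
\[
b_{i+1} = \{(s^o,q^o) \mid \exists (\bar s,\bar q)\in b_i,\ \exists \bar a\in A:\ \obs(\bar s,\bar a,s^o) = \obs(s_i,a_i,s_{i+1}),\ T(\bar s,\bar a)=s^o,\ \delta(\bar q, L(s^o))=q^o\},
\]
which mentions the play $\rho$ only through $b_i$ and $\obs(s_i,a_i,s_{i+1})$.

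Before the induction I would record two facts about $\rho' \in [\rho]$. First, because $\obs$ assigns exactly one symbol of $\Omega$ to each transition, observation-equivalent plays have the same number of transitions; hence $\rho'$ also has length $n$, so the beliefs $b_i'$ for $0\le i\le n$ are well-defined, and unpacking $\obs(\rho)=\obs(\rho')$ coordinatewise yields $\obs(s_i,a_i,s_{i+1}) = \obs(s_i',a_i',s_{i+1}')$ for every $0 \le i \le n-1$. Second, since $\rho,\rho' \in \plays(G)$ both begin at the initial state $s_0$, we have $s_0' = s_0$, hence $q_0' = \delta(\init, L(s_0)) = q_0$ and $b_0' = \{(s_0',q_0')\} = \{(s_0,q_0)\} = b_0$; this is the base case $i=0$.

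For the inductive step, assume $b_k = b_k'$ for some $0 \le k \le n-1$. Applying the displayed belief-update rule along $\rho$ expresses $b_{k+1}$ as a fixed function of $b_k$ and $\obs(s_k,a_k,s_{k+1})$, and applying it along $\rho'$ expresses $b_{k+1}'$ as the same function of $b_k'$ and $\obs(s_k',a_k',s_{k+1}')$. By the induction hypothesis $b_k = b_k'$, and by the first recorded fact $\obs(s_k,a_k,s_{k+1}) = \obs(s_k',a_k',s_{k+1}')$; therefore the two evaluations coincide and $b_{k+1} = b_{k+1}'$. This closes the induction and gives $b_i = b_i'$ for all $0 \le i \le n$.

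I do not expect a genuine obstacle: the lemma is essentially a restatement of the determinism of the belief update composed with observation-equivalence. The only points deserving explicit mention are (i) that $\rho$ and $\rho'$ have equal length, so that ``$b_i'$ for $0\le i\le n$'' is meaningful, and (ii) that the base case rests on the common initial state $s_0$ of all plays in $\plays(G)$, which is why $b_0'=b_0$ needs no separate argument.
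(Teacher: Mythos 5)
Your proof is correct and follows essentially the same route as the paper's: induction on $i$, with the base case from the common initial state and the inductive step from the fact that the belief update is a deterministic function of the previous belief and the emitted observation, which agree for observation-equivalent plays. Your write-up is if anything slightly more careful than the paper's (making explicit that $b_0=b_0'=\{(s_0,q_0)\}$ and that equal observation sequences force equal lengths), but the argument is the same.
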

\begin{proof}
% It can be proved by induction. For $i = 0$, since $\obs(\rho) = \obs(\rho')$ then $\obs(s_0) = \obs(s_0')$ by definition of the observation function. For any $(s,q) \in b_0$, $s \in [s_0] = [s_0']$ and $\delta(\init, L(s)) = q$. From the definition of $b_0'$, we have $(s,q) \in b_0'$. Thus, $b_0 \subseteq b_0'$. Also, for any $(s',q') \in b_0'$, we can similarly prove that $(s',q') \in b_0$ which implies $b_0' \subseteq b_0$. Hence, $b_0 = b_0'$.
% Assume that $b_k' = b_k$ where $1 \leq k \leq n - 1$. For any $(s^o,q^o) \in b_{k+1}$, there exists $(\bar s,\bar q)\in b_k = b_k' \text{ such that for some } \bar a \in A, \obs(\bar s, \bar a, s^o) = \obs(s_k,a_k,s_{k+1}) \text{ and } T(\bar s, \bar a)= s^o \text{ and } \delta(\bar q, L(s^o))= q^o$. From the definition of $b_{k+1}'$, $(s^o, q^o)$ is also an element of $b_{k+1}'$. And thus, $b_{k + 1} \subseteq b_{k + 1}'$. Similarly, we can prove that $b_{k + 1}' \subseteq b_{k + 1}$. We conclude that $b_{k + 1} = b_{k + 1}'$. Therefore, $b_i'  = b_i$ for $1 \leq i \leq n$.
For any $\rho' \in [\rho]$ given $\rho = s_0$, it holds that $\rho' = s\in [s_0]$. By definition, $b_0$ is a function of the set $[s_0]$ and $b'_0$ is a function of the set $[s] = [s_0]$. Thus, $b_0' = b_0$.

Assume that $b_k' = b_k$ where $1 \leq k \leq n - 1$. The belief update gives the new observation $\obs(s_k,a_k,s_k')$ based on the current belief $b_k$. The hypothesis $b_k=b_k'$ and $\rho'\in [\rho]$ implies $\obs(s_k',a_k',s_{k+1}') = \obs(s_k, a_k, s_{k+1})$. By the definition, $b_{k+1}$ is a function of $b_k$, $\obs(s_k, a_k, s_{k+1})$ and $b_{k+1}'$ is a function of $b_k'$ and $\obs(s_k',a_k',s_{k+1}')$. Since $\obs(s_k', a_k', s_{k+1}') = \obs(s_k,a_k,s_{k+1})$, we obtain $b_{k+1} = b_{k+1}'$. Therefore, we have $b_i' = b_i$ for all $1 \leq i \leq n$ by induction. 
\end{proof} 
% \jf{The proof can be simplified as follows. First, the base case can be proven because for any $\rho' \in [\rho]$ given $\rho=s_0$, it holds that $\rho' = s\in [s_0]$. By definition, $b_0$ is a function of the set $[s_0]$ and $b'_0$ is a function of the set $[s] = [s_0]$.
% The similar argument can be applied because the belief update takes the current belief $b_k$, the new observation $\obs(s,a,s')$ to a new belief $b_{k+1}$. If $b_k=b_k'$ and $\rho'\in [\rho]$ implies $\obs(s_k',a_k',s_{k+1}') = \obs(s_k, a_k, s_{k+1})$, then $b_{k+1} = b_{k+1}'$.}
% \jf{the importance of this proof is to show it can be simplified if we view the belief update as a function and the induction can show the input of the function at each time step is the same for both $\rho, \rho'$ that are observation equivalent.}
% \jf{XXXXX Chongyang, have you considered to simply the proof as I suggested above?}

% \jf{Theorem1 and theorem2 are both based on a simple fact, that can be shown separately as follows.}

% Next, we establish reachability conditions for the belief-based game.
%
\begin{lemma}
\label{lma:plays_and_belief}
    For any play $\rho =s_0s_1\ldots s_n \in \plays(G)$, let $\mathbf{p} = \mathfrak{R}(\rho)= (s_0, q_0, b_0)a_0(s_1, q_1, b_1) \ldots (s_n, q_n, b_n)$ be the corresponding play in the belief-augmented game $\mathcal{G}$.
    \begin{itemize}
        \item   $\rho \in \opaquewinplays_1$ if and only if $b_n \cap (S\times F) \ne \emptyset $ and $b_n\cap (S\times (Q\setminus F)) \ne \emptyset$.
        \item  $\rho \in \revealwinplays_1$ if and only if $b_n \subseteq (S\times F) $.
    \end{itemize}
\end{lemma}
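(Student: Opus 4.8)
The plan is to first establish one identity that pins down the belief component exactly: writing $\mathfrak{R}(\rho)=(s_0,q_0,b_0)a_0(s_1,q_1,b_1)\cdots(s_n,q_n,b_n)$,
\[
b_n \;=\; \{\, (\mathrm{last}(\sigma),\, \delta(\init,L(\sigma))) \mid \sigma\in[\rho] \,\}.
\]
Granting this, both items fall out by unfolding the definitions of $\opaquewinplays_1$ and $\revealwinplays_1$, using that $L(\tau)\in\mathcal{L}(\mathcal{A})\iff\delta(\init,L(\tau))\in F$ for every play $\tau$, that $q_n=\delta(\init,L(\rho))$, and that as $\sigma$ ranges over $[\rho]$ the pair $(\mathrm{last}(\sigma),\delta(\init,L(\sigma)))$ ranges over exactly the elements of $b_n$.

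For the ``$\supseteq$'' inclusion I would reuse the two preceding lemmas rather than re-run an induction: for $\sigma\in[\rho]$, $\sigma$ has the same length $n$ as $\rho$ (observation-equivalent plays have observation words of equal length), so applying Lemma~\ref{lma:belief-property1} to $\mathfrak{R}(\sigma)$ places $(\mathrm{last}(\sigma),\delta(\init,L(\sigma)))$ in the final belief of $\mathfrak{R}(\sigma)$, and Lemma~\ref{lma:belief-property2} identifies that final belief with $b_n$. For the ``$\subseteq$'' inclusion I would induct on $n$. The base case is $b_0=\{(s_0,q_0)\}$ with $s_0\in[s_0]$. For the step, take $(s^o,q^o)\in b_n$; by the update rule in Def.~\ref{def:belief-augmented-arena} there exist $(\bar s,\bar q)\in b_{n-1}$ and $\bar a\in A$ with $T(\bar s,\bar a)=s^o$, $\obs(\bar s,\bar a,s^o)=\obs(s_{n-1},a_{n-1},s_n)$, and $\delta(\bar q,L(s^o))=q^o$. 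The induction hypothesis, applied to the length-$(n-1)$ prefix of $\rho$, gives a play $\sigma'$ observation-equivalent to that prefix with $\mathrm{last}(\sigma')=\bar s$ and $\delta(\init,L(\sigma'))=\bar q$; then $\sigma:=\sigma'\,\bar a\,s^o$ is a legal play of $G$ whose observation word is $\obs(\sigma')\,\obs(\bar s,\bar a,s^o)=\obs(\rho)$, so $\sigma\in[\rho]$, while $(\mathrm{last}(\sigma),\delta(\init,L(\sigma)))=(s^o,\delta(\bar q,L(s^o)))=(s^o,q^o)$.

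Finally I would read off the two claims. For $\revealwinplays_1$: since $\rho\in[\rho]$, the requirement ``$L(\sigma)\in\mathcal{L}(\mathcal{A})$ for every $\sigma\in[\rho]$'' already entails $\rho\in\winplays_1$, and by the identity it is equivalent to ``$q'\in F$ for every $(s',q')\in b_n$'', i.e.\ $b_n\subseteq S\times F$. For $\opaquewinplays_1$: ``$\exists\,\sigma\in[\rho]:L(\sigma)\notin\mathcal{L}(\mathcal{A})$'' translates, via the identity, to $b_n\cap(S\times(Q\setminus F))\neq\emptyset$, and $\rho\in\winplays_1$ means $q_n\in F$, which by Lemma~\ref{lma:belief-property1} exhibits the true pair $(s_n,q_n)$ inside $b_n\cap(S\times F)$. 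I expect the real work to lie in the ``$\subseteq$'' induction, and specifically in the bookkeeping that a play's observation word is the concatenation of its per-transition observations, so that extending an observation-matching prefix by one matching transition keeps the witness inside $[\rho]$; the remaining steps are a routine translation of quantifiers. One point to handle with care is that $b_n\cap(S\times F)\neq\emptyset$ by itself only certifies that \emph{some} play equivalent to $\rho$ is accepting, so the forward direction of the $\opaquewinplays_1$ characterization must lean on Lemma~\ref{lma:belief-property1} (the true pair $(s_n,q_n)\in b_n$) to connect the accepting part of $b_n$ back to $\rho$ itself.
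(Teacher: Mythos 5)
Your proposal is correct and follows the same overall route as the paper's proof: the forward direction combines Lemma~\ref{lma:belief-property1} (the true pair $(s_n,q_n)$ lies in $b_n$) with Lemma~\ref{lma:belief-property2} (observation-equivalent plays share the same beliefs), and the converse direction rests on the fact that every element of $b_n$ is realized by some observation-equivalent play. The one substantive difference is that the paper simply asserts this last fact (``by the construction of the belief, there exists a play $\mathbf{p}'$ \dots such that $\mathfrak{R}^{-1}(\mathbf{p}')\in[\rho]$'') while you prove it, via the identity $b_n=\{(\mathrm{last}(\sigma),\delta(\init,L(\sigma)))\mid\sigma\in[\rho]\}$ and an induction on $n$ for the ``$\subseteq$'' inclusion. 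That induction is exactly the missing bookkeeping, so your write-up is the more complete of the two. One caveat applies to both versions: in the converse direction of the first bullet, $b_n\cap(S\times F)\neq\emptyset$ only certifies that \emph{some} play in $[\rho]$ is accepting, not that $\rho$ itself is, so $\rho\in\winplays_1$ (i.e., $q_n\in F$) does not formally follow from the stated belief conditions. You flag this subtlety but deploy it only in the forward direction; the paper's converse has the same gap. In the context where the lemma is used (Theorem~\ref{thm1}, where $\mathcal{F}_O$ separately requires $q\in F$) this is harmless, but a fully tight statement of the first bullet should either add $q_n\in F$ as an explicit hypothesis or restrict to $\rho\in\winplays_1$.
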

\begin{proof}
 $\Rightarrow$:    First,  for all $i\ge 0$, $(s_i,q_i)\in b_i$. Thus, if $\rho\in \opaquewinplays_1$, then $q_n\in F$ and thus $b_n \cap S\times F\ne \emptyset$.

In addition, there exists $\rho'\in [\rho]$ such that $L(\rho')\not\models \varphi$. Let $\mathfrak{R}(\rho') = (s_0',q_0',b_0')a_0' \dots (s_n', q_n', b_n')$ be the corresponding play in $\mathcal{G}$ of $\rho'$. By Lemma~\ref{lma:belief-property1}, it holds that $q_n'\in Q\setminus F$ and thus $b_n' \cap S\times (Q\setminus F)\ne \emptyset$.

Finally, because $\rho$ and $\rho'$ are observation equivalent, then $b_i= b_i'$ for all $0\le i\le n$ by Lemma \ref{lma:belief-property2}. Thus, $ b_n' \cap S\times (Q\setminus F)\ne \emptyset$ is equivalent to $ b_n \cap S\times (Q\setminus F)\ne \emptyset$.

$\Leftarrow$: Conversely, if $b_n \cap S\times F\ne \emptyset $ and $b_n\cap S\times (Q\setminus F) \ne \emptyset$, then let $(s_n,q_n)\in b_n \cap S\times F$ and $(s_n', q_n')\in b_n \cap S\times (Q\setminus F)$. By the construction of the belief, there exists a play $\mathbf{p}' = (s'_0,q'_0,b_0) \ldots (s'_n,q'_n,b_n)$ such that $\mathfrak{R}^{-1}(\mathbf{p}') \in [\rho]$. Because $(s'_n, q_n')\in b_n \cap S\times (Q\setminus F) $, then $L(s_0's_1'\ldots s_n')\not\models \varphi$. 

The proof of the second statement follows analogously and thus is omitted.

%\jf{check, is it obvious?}

\end{proof}

\begin{definition}[Reachability game and winning region \cite{gradelAutomataLogicsInfinite2002}]
Given the two-player turn-based game arena $\mathcal{G} =\langle V,A_1\cup A_2, \Delta, v_0\rangle$, a reachability objective, denoted $\Eventually X$ for $X\subseteq V$, can be satisfied by a play $\mathbf{p} =v_0v_1\ldots v_n$ if there exists $0\le i\le n$, $v_i\in \mathcal{F}_O$.
Given 
a reachability objective $\Eventually X$ for P1, the sure-winning region of P1 is a set $\win_1(X)$ of states in $\mathcal{G}$ from which P1 has a strategy to enforce a play that satisfies $\Eventually X$, regardless of the counter-strategy for P2. %The sure-winning region of P2 is $V\setminus\win_1(X)$.
\end{definition}

% \jf{Given the above lemma? will the proof of them1 and 2 become easier and not repetitive?}
\begin{theorem}
\label{thm1}
Given the original game arena $G =(S, A_1\cup A_2, T, s_0)$, P1's temporal logic objective $\varphi$ and its corresponding \ac{dfa} $\mathcal{A} = (Q,2^\calAP, \delta, \init, F)$,     P1 has an opacity-enforcing winning strategy in the original game $G$ with the temporal objective $\varphi$ if and only if P1 has a sure-winning strategy in the belief-augmented game arena $\mathcal{G}$ with a reachability objective $\Eventually \mathcal{F}_O$ \footnote{A reachability objective is equivalently expressed as $\Eventually p$, reads ``eventually $p$ is true'' and let atomic proposition $p\in L(q)$ for each $q\in \mathcal{F}_O$.}
where $$\mathcal{F}_O = \{(s,q,b)\mid q\in F, b \cap (S\times (Q\setminus F )) \ne \emptyset \}.$$
\end{theorem}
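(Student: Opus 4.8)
The plan is to prove the two implications separately, in each case transporting a strategy across the bijection $\mathfrak{R}$ between $\plays(G)$ and $\plays(\mathcal{G})$ and then reading off the needed property from Lemma~\ref{lma:plays_and_belief}. First I would make the strategy transfer precise: since the automaton- and belief-components of a state of $\mathcal{G}$ are uniquely determined by the underlying play of $G$, the map $\mathfrak{R}$ restricts to a bijection between $\prefplays(G)$ and $\prefplays(\mathcal{G})$, so every (possibly randomized) strategy $\pi_i\in\Pi_i$ on $G$ induces a strategy $\hat\pi_i := \pi_i\circ\mathfrak{R}^{-1}$ on $\mathcal{G}$ and conversely, and this correspondence preserves reachable plays: $\rho'\in\plays(\rho,\pi_1,\pi_2)$ iff $\mathfrak{R}(\rho')\in\plays(\mathfrak{R}(\rho),\hat\pi_1,\hat\pi_2)$ (and the induced supports agree). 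Write $\mathbf{p}_0 := \mathfrak{R}(\rho_0)$ and let $v$ be its last state.

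For the ``only if'' direction, assume $\pi_1$ is opacity-enforcing winning for P1 from $\rho_0$ in $G$. Given any P2 strategy $\hat\pi_2$ on $\mathcal{G}$, pull it back to $\pi_2$ on $G$, and let $\mathbf{p}=(s_0,q_0,b_0)a_0\cdots(s_n,q_n,b_n)$ be any play consistent with $(\hat\pi_1,\hat\pi_2)$ from $\mathbf{p}_0$. Then $\rho:=\mathfrak{R}^{-1}(\mathbf{p})\in\plays(\rho_0,\pi_1,\pi_2)\subseteq\opaquewinplays_1$, so Lemma~\ref{lma:plays_and_belief} gives $q_n\in F$ and $b_n\cap(S\times(Q\setminus F))\ne\emptyset$, i.e. $(s_n,q_n,b_n)\in\mathcal{F}_O$. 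Hence every play consistent with $\hat\pi_1$ visits $\mathcal{F}_O$, so $\hat\pi_1$ is a sure-winning strategy for $\Eventually\mathcal{F}_O$ (equivalently $v\in\win_1(\mathcal{F}_O)$).

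For the ``if'' direction, suppose $\hat\pi_1$ is sure-winning for $\Eventually\mathcal{F}_O$ from $v$, and let $\pi_1$ be its pullback. Fix any $\pi_2\in\Pi_2$ and any $\rho=s_0a_0\cdots s_n\in\plays(\rho_0,\pi_1,\pi_2)$; its image $\mathbf{p}=\mathfrak{R}(\rho)$ is consistent with $\hat\pi_1$ and some P2 strategy, hence reaches some $(s_k,q_k,b_k)\in\mathcal{F}_O$ with $k\le n$, so $q_k\in F$ and $b_k$ contains a pair with non-accepting DFA-component. Because the accepting states of $\mathcal{A}$ are absorbing, $q_n\in F$, hence $L(\rho)\in\varphi$ and $\rho\in\winplays_1$. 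It then remains to propagate the opacity witness from time $k$ to time $n$, so that $b_n\cap(S\times(Q\setminus F))\ne\emptyset$ and Lemma~\ref{lma:plays_and_belief} yields $\rho\in\opaquewinplays_1$; this shows $\pi_1$ is opacity-enforcing winning.

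The strategy transfer and the bookkeeping around $\mathfrak{R}$ are routine. The step I expect to be the real obstacle is the last one: showing that once a play has entered $\mathcal{F}_O$ it stays opacity-enforcing winning — equivalently, that $\mathcal{F}_O$ is a ``winning-persistent'' set along plays consistent with a winning strategy of P1. This does not follow from $\Delta$ alone, since a non-accepting element of the attacker's belief may drop out after an observation; it must instead be drawn from the absorbing-accepting-states assumption together with Lemmas~\ref{lma:belief-property1} and~\ref{lma:belief-property2} (each belief element is realized by a genuine observation-equivalent play, and observation-equivalent plays carry identical beliefs), or, more operationally, by making $\mathcal{F}_O$ absorbing in the reachability game in the usual way and matching that with P1's objective in $G$ of achieving the secret opaquely, which is a one-shot property. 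With this persistence in hand, both implications close.
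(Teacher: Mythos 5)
Your proof follows the same route as the paper's: transport strategies across the bijection $\mathfrak{R}$ and invoke Lemma~\ref{lma:plays_and_belief} to translate between membership in $\opaquewinplays_1$ and membership of the last augmented state in $\mathcal{F}_O$. The ``only if'' direction is fine and matches the paper. The one step you leave open in the ``if'' direction --- persistence of the opacity witness from the first visit of $\mathcal{F}_O$ at time $k$ to the end of the play at time $n$ --- is a genuine issue, and to your credit the paper's own proof does not address it either: it simply asserts that the \emph{last} state of every play consistent with the sure-winning strategy lies in $\mathcal{F}_O$, whereas sure-winning for $\Eventually \mathcal{F}_O$ only guarantees that some state along the play does. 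Be aware, however, that the first repair you sketch does not work. The absorbing-accepting-states assumption guarantees that accepting DFA components of belief elements stay accepting, but it says nothing about a non-accepting witness $(\bar s,\bar q)\in b_k$ surviving: under the update rule for $\Delta$, that element contributes to $b_{k+1}$ only if some action from $\bar s$ reproduces the observation $\obs(s_k,a_k,s_{k+1})$, and even when it does, its successor $\delta(\bar q, L(s^o))$ may land in $F$. So a non-accepting belief element can genuinely drop out or turn accepting, and Lemmas~\ref{lma:belief-property1}--\ref{lma:belief-property2} do not prevent this.

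The repair that actually closes the argument is your second, ``operational'' one: the equivalence holds under the convention that plays terminate upon satisfaction of the objective (equivalently, that $\mathcal{F}_O$, or the accepting layer $S\times F\times 2^{S\times Q}$, is made absorbing in the reachability game), so that ``reaching $\mathcal{F}_O$'' and ``ending in $\mathcal{F}_O$'' coincide. This is tacitly how the paper operates --- note that in its example the goal state is declared blocking, and note also that without some such maximality convention the literal definition of opacity-enforcing winning (which quantifies over \emph{all} finite plays in $\plays(\rho_0,\pi_1,\pi_2)$, including short prefixes not yet in $\winplays_1$) would be vacuous. State that convention explicitly and your proof is complete; as written, the final step is asserted rather than proved.
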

% \aknote{Do we want to say that $b \cap (S\times (Q\setminus F )) \ne \emptyset$ and $b \cap (S\times (F )) \ne \emptyset$}
% \jf{the second clause is not needed. By lemma 1, it holds that $b \cap (S\times (F )) \ne \emptyset$.}

% \jf{how about using opacity enforcing winning?}

%\jf{Hazhar: i am confused by how to use replaced. is the first one to replace the second, or the other way around?}
%\hr{The first one replaces the second one, i.e., remove the second one.}
\begin{proof}
If P1 has an opacity-enforcing winning strategy $\pi_1^*$ in the original game $G$, then for any strategy $\pi_2$ of P2, for any play $\rho \in  \plays(s_0,\pi_1^\ast, \pi_2)  $, $\rho \in \opaquewinplays_1$. 
Let $\mathbf{p} = \mathfrak{R}(\rho)= (s_0, q_0, b_0)a_0(s_1, q_1, b_1) \ldots (s_n, q_n, b_n)$. By lemma \ref{lma:plays_and_belief}, $q_n\in F$ and $b_n\cap (S\times (Q\setminus F) ) \ne \emptyset$. In other words, the strategy $\pi_1^\ast$ enforces a run into $\mathcal{F}_O$ regardless of P2's strategy. Thus, this strategy $\pi_1^\ast$ is a sure-winning strategy in the game $\mathcal{G}$ for the reachability objective $\Eventually \mathcal{F}_O$.

Conversely, if P1 has a sure-winning strategy $\pi_1^{b*}$ in the belief-augmented game $\mathcal{G}$. Then for any strategy $\pi_2$ of P2, for any play $\mathbf{p} = (s_0,q_0,b_0)a_0 \dots (s_n, q_n, b_n) \in \plays(\mathcal{G}, \pi_1^{b*},\pi_2)$, the last state $(s_n, q_n, b_n)$ satisfies $q_n \in F$ and $b_n\cap (S\times (Q\setminus F) ) \ne \emptyset$. It is also noted that by Lemma~\ref{lma:belief-property1}, if  $q_n \in F$, then $b_n\cap (S\times F)) \ne \emptyset$. According to lemma \ref{lma:plays_and_belief}, $\rho = \mathbf{p} \in  \opaquewinplays_1$. Therefore, P1 has an opacity-enforcing  winning strategy in the original game $G$. 
 \end{proof}

% ============================= 
% Positioning image from next section. 
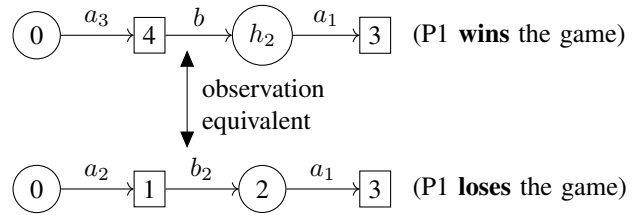
\begin{figure}[tp!]
  \centering
  
\begin{tikzpicture}[node distance=2cm, auto]

  % Define nodes
  \node[circle, draw, black] (0) at (0,0) {0};
  \node[rectangle, draw, black] (3) at (4.5,0) {3};
  \node[rectangle, draw, black] (4) at (1.5,0) {4};
  \node[circle, draw, black] (h2) at (3,0) {$h_2$};
  \node[text width=3cm] at (6.5,0) {(P1 \textbf{wins} the game)};

  \node[text width=2cm] at (3.2,-0.95) {observation equivalent};
  
  % Draw edges
  \draw[->, black] (0) -- node[above] {$a_3$} (4);
  \draw[->, black] (4) -- node[above] {$b$} (h2);
  \draw[->, black] (h2) -- node[above] {$a_1$} (3);
  \draw[>=triangle 45, <->] (2, -0.2) -- (2,-1.5);

\end{tikzpicture}

\begin{tikzpicture}[node distance=2cm, auto]

  % Define nodes
  \node[circle, draw, black] (0) at (0,0) {0};
  \node[rectangle, draw, black] (3) at (4.5,0) {3};
  \node[rectangle, draw, black] (1) at (1.5,0) {1};
  \node[circle, draw, black] (2) at (3,0) {2};
  \node[text width=3cm] at (6.5,0) {(P1 \textbf{loses} the game)};
  
  % Draw edges
  \draw[->, black] (0) -- node[above] {$a_2$} (1);
  \draw[->, black] (1) -- node[above] {$b_2$} (2);
  \draw[->, black] (2) -- node[above] {$a_1$} (3);

\end{tikzpicture}

\caption{An example of opacity-enforcing winning play (there exists a observation equivalent play which loses the game).}

\label{fig:illustrative_example}
\end{figure}
% ==============================

Since Theorem~\ref{thm1} shows the opacity-enforcing winning strategy in a deterministic game can be computed by solving a zero-sum turn-based deterministic game with reachability objective for P1 (see \cite{Mazala2002} for a  tutorial on the solutions of infinite games). The following statements can be derived from the solution concepts of zero-sum reachability games\cite{martinBorelDeterminacy1975,gradelAutomataLogicsInfinite2002}:
\begin{itemize}
    \item Memoryless, deterministic strategies are sufficient for sure-winning for both P1 and P2.     
    \item The game is determined: For any state $v\in V$, either P1 has a sure-winning strategy or P2 has a sure-winning strategy. 
    \item The sure-winning region for player $i$ is the same as the almost-sure winning region, which including all states where player $i$ has a strategy to ensure his/her objective to be satisfied with probability one. 
\end{itemize}
Thus, we refer to the sure-winning/almost-sure winning region as the winning region.

We introduce the winning regions in the belief-augmented game $\mathcal{G}$:
\begin{itemize}
    \item $\win_1^{opaque} = \win_1(\mathcal{F}_O)$ is the opacity-enforcing winning region for P1 (the sure-winning region for P1's objective $\Eventually \mathcal{F}_O$.)
   %  \item $\win_2^{reveal}$ is the secret-revealing region for P2 (the sure-winning region for P2's objective $\Eventually \mathcal{E}$.)   
    \item $ \win_1 = \win_1( \mathcal{ F})$, where 
    $\mathcal{ F}= S\times F \times 2^{S\times Q}$,
    is the winning region for P1 given the objective to satisfy $\varphi$ without enforcing opacity (the sure-winning region for P1's objective $\Eventually \mathcal{F}$.)
\end{itemize}

 The question is for a play $\rho \in \prefplays(G)$, if P1 can enforce satisfying the secret goal $\varphi$ but   cannot enforce opacity with certainty, does P1 has a strategy to enforce opacity with some positive probability?  The following theorem provides a negative answer to this question.

% \jf{Hazhar: I have revised the notation for winning regon, pleas make changes correspondingly in the statement of theorem 2.}
\begin{theorem}
\label{thm2}
Given the belief-augmented game $\mathcal{G}$, let $\widetilde \Pi_1 $ be the set of winning strategies for P1 with respect to the reachability objective $\Eventually \mathcal{F}$. For any state $v\in \win_1$, if P1 is restricted to policies in $\widetilde\Pi_1 $, then one of the two cases occurs:
\begin{itemize}
    \item If $v\in \win_1^{opaque}$, then there exists a strategy $\pi_1^o \in \widetilde \Pi_1$ that ensures opacity-enforcing winning.
    \item otherwise, for any strategy $\pi_1 \in \widetilde \Pi_1 $, any P2's strategy  $\pi_2$ ensure to reveal the secret when the secret is satisfied.
\end{itemize}
\end{theorem}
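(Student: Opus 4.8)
The plan is to recast both cases as statements about the two reachability games on the belief-augmented arena $\mathcal{G}$, with targets $\mathcal{F}_O$ and $\mathcal{F}=S\times F\times 2^{S\times Q}$ respectively, to use the inclusion $\mathcal{F}_O\subseteq\mathcal{F}$ (immediate from the definitions), and to translate ``accepting vs.\ revealing'' through the terminal belief via Lemma~\ref{lma:plays_and_belief}. For the first bullet, suppose $v\in\win_1^{opaque}=\win_1(\mathcal{F}_O)$. Then P1 has a sure-winning strategy $\pi_1^o$ for $\Eventually\mathcal{F}_O$ from $v$; since every play forced into $\mathcal{F}_O$ is in particular forced into $\mathcal{F}$, we get $\pi_1^o\in\widetilde\Pi_1$, and by Theorem~\ref{thm1} this same $\pi_1^o$ is an opacity-enforcing winning strategy of P1 in $G$. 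This disposes of the first case.

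For the second bullet, suppose $v\in\win_1\setminus\win_1^{opaque}$, that is, $v\notin\win_1(\mathcal{F}_O)$. First I would invoke determinacy and memoryless determinacy of the reachability game on $\mathcal{G}$ with target $\mathcal{F}_O$: P2 has a memoryless deterministic strategy $\pi_2^\dagger$ which, from $v$, keeps every consistent play outside $\mathcal{F}_O$ forever, against every P1 strategy at once --- this is exactly (one of) P2's opacity-preventing strategies. Now fix \emph{any} $\pi_1\in\widetilde\Pi_1$, take any $\rho=s_0a_0\cdots s_n\in\plays(v,\pi_1,\pi_2^\dagger)$ on which the secret is satisfied, and write $\mathfrak{R}(\rho)=(s_0,q_0,b_0)a_0\cdots(s_n,q_n,b_n)$. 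Satisfying $\varphi$ gives $q_n\in F$; the choice of $\pi_2^\dagger$ gives $(s_n,q_n,b_n)\notin\mathcal{F}_O$; together these force $b_n\cap\bigl(S\times(Q\setminus F)\bigr)=\emptyset$, i.e.\ $b_n\subseteq S\times F$, so by the second item of Lemma~\ref{lma:plays_and_belief}, $\rho\in\revealwinplays_1$. Thus against $\pi_2^\dagger$ --- and hence against every opacity-preventing strategy of P2 --- the secret is revealed on every play on which it is satisfied. Note the conditional part uses only that P2 plays $\pi_2^\dagger$, so it in fact holds for any P1 strategy; the restriction to $\widetilde\Pi_1$ is needed only for the following packaging step.

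To finish I would phrase this as the promised negative answer and record independence of the strategy class. Because $\pi_1\in\widetilde\Pi_1$, every play consistent with $(\pi_1,\pi_2^\dagger)$ eventually reaches $\mathcal{F}$ (accepting DFA states are absorbing), so the secret is satisfied with probability one and therefore, by the previous paragraph, revealed with probability one; hence $\Pr^{v,\pi_1,\pi_2^\dagger}(\rho\in\opaquewinplays_1)=0$, so no $\pi_1\in\widetilde\Pi_1$ can be winning and positive opacity-enforcing. For the strategy-class remark: the regions $\win_1(\mathcal{F}_O)$ and $\win_1(\mathcal{F})$, and hence the hypothesis $v\notin\win_1(\mathcal{F}_O)$, are the same under deterministic, nondeterministic, and randomized strategies (sure winning coincides with almost-sure winning and memoryless deterministic strategies suffice, as recorded above); $\pi_2^\dagger$ may always be taken memoryless deterministic; and if P1 uses a randomized $\pi_1$ that is winning for $\Eventually\mathcal{F}$, then every play in the support of $(\pi_1,\pi_2^\dagger)$ still reaches $\mathcal{F}$ while avoiding $\mathcal{F}_O$, hence is secret-revealing, so the opacity event once more has probability zero.

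The second bullet is the main obstacle. Its two delicate points are (i) applying determinacy so that a \emph{single} P2 strategy refutes \emph{all} of $\widetilde\Pi_1$ simultaneously, rather than only a fixed $\pi_1$; and (ii) the belief bookkeeping that turns ``the terminal belief-augmented state is accepting but lies outside $\mathcal{F}_O$'' into membership in $\revealwinplays_1$ through Lemma~\ref{lma:plays_and_belief}. Granting Theorem~\ref{thm1} and the determinacy facts already stated, the first bullet and the robustness-to-randomization remark are essentially immediate.
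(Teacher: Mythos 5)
Your proposal is correct and follows essentially the same route as the paper: the first bullet from the inclusion $\mathcal{F}_O\subseteq\mathcal{F}$, and the second by invoking determinacy to obtain a single P2 safety strategy avoiding $\mathcal{F}_O$, then combining $q\in F$ (from reaching $\mathcal{F}$) with $b\cap(S\times(Q\setminus F))=\emptyset$ to conclude $b\subseteq S\times F$ and hence secret revelation via Lemma~\ref{lma:plays_and_belief}. Your added remarks on probability-one revelation and invariance under randomized strategies are consistent with, and slightly more explicit than, the paper's discussion following the theorem.
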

\begin{proof}
The first case is a natural consequence because $\mathcal{F}_O \subseteq  \mathcal{F} $. To prove the second case,
consider a state $v  = (s , q , b ) \in   \win_1 \setminus \win_1^{opaque}$, P2 has a    strategy $\pi^{\mbox{safe}}_2$ to enforce the set $\win_1^{opaque}$ is never reached, due to the duality of the reachability game. Let $\pi_1\in \widetilde \Pi_1$ and $\mathbf{p} =  v_0 v_1v_2\ldots v_n$ be a play starting from $v_0=v$, induced by the strategy profile $(\pi_1,\pi^{\mbox{safe}}_2)$, the following two conditions shall hold:
\begin{itemize}
    \item Since $\pi_1$ ensures $\mathcal{F}$ must be reached, there exists $k$, $ 0 \le k\le n$ such that $v_k=(s_k,q_k,b_k)\in \mathcal{F}$, which means $q_k\in F$ and $b_k \cap S\times F\ne \emptyset$ due to $(s_k,q_k)\in b_k$ (Lemma~\ref{lma:belief-property1}).
    \item Since $\pi^{\mbox{safe}}_2$ ensures a state in $\mathcal{F}_O$ is not reached, for any $j \ge 0$, $v_j=(s_j,q_j,b_j) \notin \mathcal{F}_O$, which means either $q_j\notin F$ or $b_j\cap S\times (Q\setminus F) = \emptyset$. 
\end{itemize}
Given that $q_k \in F$ followed from condition 1, then it is the case that 
$b_k \cap S\times F\ne \emptyset$ and, from condition 2,  
$b_k   \cap S\times (Q\setminus F) = \emptyset$. As a result, it must be the case that $b_k \subseteq S\times F$. Thus, the safety strategy played by P2 against P1 who only selects a winning strategy in $\widetilde \Pi_1$ ensures to reveal P1's secret.
\end{proof}

% ==============================
% Positioning image 
\begin{figure}[tp!]
\centering
\begin{tikzpicture}[->,>=stealth',shorten >=1pt,auto,node distance=2.5cm, scale=0.75,transform shape]
	% Define nodes
	\node[state] (0)  {\Large $0$};
	\node[state, rectangle, fill=blue!10] (5) [right=2cm of 0] {\Large $h_1$};
	\node[state, rectangle, fill=red!10] (1) [below=1.5cm of 5] {\Large $1$};
	\node[state, rectangle, fill=red!10] (4) [below=1.5cm of 0] {\Large $4$};
	\node[state, fill=blue!10] (2) [right=2cm of 5] {\Large $2$};
	\node[state, rectangle, double] (3) [right=2cm of 2] {\Large $3$};
	\node[state, fill=blue!10] (6)  [right=3cm of 1] {\Large $h_2$};
	
	% Draw path 
	\path 	
		(0) edge   node {\Large $a_1$} (5)
		(0) edge   node {\Large $a_2$} (1)
		(0) edge   node {\Large $a_3$} (4)
		(5) edge   node {\Large $b_1$} (2)
		(5) edge   node[pos=0.8] {\Large $b_2$} (6)
		(1) edge   node[pos=0.1] {\Large $b_2$} (2)
		(1) edge   node[above] {\Large $b_1$} (6)
		(2) edge   node {\Large $a_1$} (3)
		(6) edge   node {\Large $a_1$} (3)
		(4) edge[bend right]   node {\Large $a_1$} (6)
		;
\end{tikzpicture}
\caption{An illustrative example of opacity-enforcing winning.}
\label{fig:example1}
\end{figure}
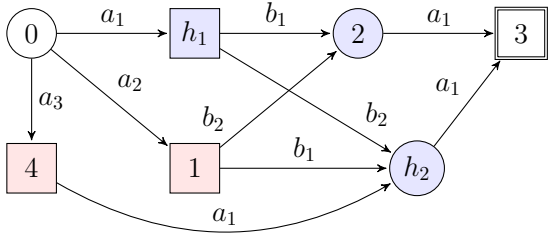

%\begin{tikzpicture}[node distance=2cm, auto]
%
%  % Define nodes
%  \node[circle, draw, black] (0) at (0,0) {0};
%  \node[rectangle, draw, black] (1) at (2,-1.5) {1};
%  \node[circle, draw, black] (2) at (4,0) {2};
%  \node[rectangle, draw, black, double] (3) at (6,0) {3};
%  \node[rectangle, draw, black] (4) at (2,-3) {4};
%  \node[rectangle, draw, black] (h1) at (2,0) {$h_1$};
%  \node[circle, draw, black] (h2) at (4,-1.5) {$h_2$};
%
%  % Draw dashed ellipse around nodes 1 and 4
%  \draw[dashed, red] (2,-2.25) ellipse (0.5cm and 1.2cm);
%
%  % Draw dashed ellipse around nodes h1, h2 and 2
%  \draw[dashed, blue, rotate=50] (1.8,-2.8) ellipse (1.2cm and 2cm);
%  
%  % Draw edges
%  \draw[->, black] (0) -- node[above] {$a_2$} (1);
%  \draw[->, black] (0) -- node[below left] {$a_3$} (4);
%  \draw[->, black] (0) -- node[above] {$a_1$} (h1);
%  \draw[->, black] (h1) -- node[above] {$b_1$} (2);
%  \draw[->, black] (h1) -- node[above] {$b_2$} (h2);
%  \draw[->, black] (1) -- node[below] {$b_2$} (2);
%  \draw[->, black] (1) -- node[below] {$b_1$} (h2);
%  \draw[->, black] (4) -- node[below right] {$b_1,b_2$} (h2);
%  \draw[->, black] (2) -- node[above] {$a_1$} (3);
%  \draw[->, black] (h2) -- node[below] {$a_1$} (3);
%  %\draw[->,shorten >=2pt, black] (3) -- +(1.5,0);
%
%\end{tikzpicture}

% ==============================

Based on Theorem~\ref{thm2}, from $v\in \win_1\setminus \win_1^{opaque}$  a P2's strategy to reveal P1's secret can be any strategy in the game in which P2 aims to prevent a play reaching $\win_1^{opaque}$. There can be two outcomes, either P1 chooses to not satisfy the goal by not committing to a strategy in $\widetilde \Pi_1$, or the secret will be revealed the moment a state in $\cal F$ is reached.

%\paragraph*{Complexity analysis}
%\begin{added}
\begin{lemma}
    The time complexity of our algorithm to solve our problem with the game arena $G =(S, A_1\cup A_2, T, s_0)$ and the \ac{dfa} $\mathcal{A} = (Q,2^\calAP, \delta, \init, F)$ as inputs to the problem is $\mathcal{O}(|A||S|^2|Q|^22^{|S||Q|})$.
\end{lemma}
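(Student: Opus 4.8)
The plan is to bound separately the two phases of the algorithm — (i) building the belief-augmented arena $\mathcal{G}$ of Definition~\ref{def:belief-augmented-arena}, and (ii) solving the reachability game(s) on $\mathcal{G}$ invoked by Theorems~\ref{thm1} and~\ref{thm2} (namely computing $\win_1^{opaque}=\win_1(\mathcal{F}_O)$, and, when the prefix's state is not in it, also $\win_1=\win_1(\mathcal{F})$) — and then add the two bounds and keep the maximum. Everything needed about what the algorithm computes is already supplied by Theorems~\ref{thm1}–\ref{thm2}; the lemma is purely an implementation-cost accounting.

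For phase (i): by Definition~\ref{def:belief-augmented-arena} the state set is $V = S\times Q\times 2^{S\times Q}$, so $|V|\le |S|\,|Q|\,2^{|S||Q|}$, and since $\mathcal{G}$ is deterministic each state has at most $|A|$ outgoing edges, i.e. $\mathcal{O}(|A|\,|S|\,|Q|\,2^{|S||Q|})$ transitions. The cost of one transition $\Delta((s,q,b),a)$ is dominated by forming the successor belief $b'$, which depends on $b$ only through the observation $o=\obs(s,a,s')$: one scans the at most $|S||Q|$ pairs $(\bar s,\bar q)\in b$ and, for each, records the pairs $(s^o,\delta(\bar q,L(s^o)))$ for successors $s^o=T(\bar s,\bar a)$ whose observation equals $o$. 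I would first do a one-time preprocessing pass, costing $\mathcal{O}(|S|\,|A|)$, that tabulates for every source state and every observation the list of such matching successors; this table is reused across all of $\mathcal{G}$, so each belief update then costs $\mathcal{O}(|S|\,|Q|)$ — the belief size times constant work per already-tabulated successor, using a bitmask over the $|S||Q|$-element set $S\times Q$ so that insertion/membership is $\mathcal{O}(1)$. Multiplying, the construction of $\mathcal{G}$ (together with a single linear scan of $V$ to extract $\mathcal{F}_O$ and $\mathcal{F}$) costs $\mathcal{O}(|A|\,|S|\,|Q|\,2^{|S||Q|})\cdot\mathcal{O}(|S|\,|Q|)=\mathcal{O}(|A|\,|S|^2\,|Q|^2\,2^{|S||Q|})$, absorbing the preprocessing term.

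For phase (ii): the standard backward attractor fixpoint for a reachability objective on a finite turn-based arena runs in time linear in the size of the arena, i.e. $\mathcal{O}(|V|+|E|)=\mathcal{O}(|A|\,|S|\,|Q|\,2^{|S||Q|})$; performing it twice (once for $\Eventually\mathcal{F}_O$, once for $\Eventually\mathcal{F}$) only changes the constant, and both are dominated by phase (i). Adding the two phases yields the claimed $\mathcal{O}(|A|\,|S|^2\,|Q|^2\,2^{|S||Q|})$. I expect the delicate point — and the only nonroutine step — to be the per-update accounting in phase (i): charging each belief update $\mathcal{O}(|S|\,|Q|)$ rather than the naive $\mathcal{O}(|S|\,|Q|\,|A|)$ is exactly what pins down the exponent of $|A|$ in the final bound, and it relies on (a) the observation function being deterministic, so $b'$ is a function of $(b,o)$ alone, (b) precomputing the source-state/observation successor table once, so that inside an update one never re-iterates over all of $A$, and (c) building $\mathcal{G}$ lazily/on the fly so the exponential factor $2^{|S||Q|}$ is multiplied by no more than the stated polynomial overhead.
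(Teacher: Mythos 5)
Your proposal is correct and follows essentially the same route as the paper's own proof: decompose the cost into (i) constructing the belief-augmented arena, charging $\mathcal{O}(|S||Q|)$ per belief update over $\mathcal{O}(|A||S||Q|2^{|S||Q|})$ transitions, and (ii) a linear-time attractor/Zielonka computation on $\mathcal{G}$, with (i) dominating. Your preprocessing table for handling the $\exists\,\bar a\in A$ quantifier in the belief update is a more explicit justification of the per-update cost than the paper's appeal to ``appropriate data structures, mainly hash-tables,'' but the decomposition and the resulting bound are the same.
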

\begin{proof}
    The time of our algorithm consists of (1) the time to constrcut the belief-based game $\mathcal{G}$ and (2) the time to compute a winning strategy for P1 on $\mathcal{G}$.
    The belief-based game $\mathcal{G}$ has $\mathcal{O}(|V|)=\mathcal{O}(|S||Q|2^{|S||Q|})$ states and $\mathcal{O}(|\Delta|)=\mathcal{O}(|A||V|)$ transitions. %
    Using appropriate data structures, mainly hash-tables, it takes $\mathcal{O}(|A||S|^2|Q|^22^{|S||Q|})$ to construct the belief-based game. This is because for each state of this game, which is in the form of a $v=(s, q, b) \in V$, and for each action $a \in A$, we must compute what is $\Delta(v, a)$.
    Computing that requires to access $T(\bar s, a)$ and $\delta(\bar q, a)$ for each $(\bar s, \bar o) \in b$ and the size of $b$ can be $|S||Q|$ in the worst case.
    Using the Zielonka's algorithm~\cite{zielonka1998infinite}, it takes $\mathcal{O}(|V|+|\Delta|)=\mathcal{O}(|A||S||Q|2^{|S||Q|})$ to compute a winning strategy for P1 on $\mathcal{G}$.
    Therefore, the time complexity of our algorihtm is $\mathcal{O}(|A||S|^2|Q|^22^{|S||Q|})$.
\end{proof}

\section{Experiments}

\begin{example}
In this illustrative example (Fig \ref{fig:example1}), we demonstrate how to solve an opacity-enforcing winning strategy for P1. The game consists of seven states, with circles representing P1's states and squares representing P2's states. P1 can choose from three actions, $A_1 = \{a_1, a_2, a_3\}$, while P2 has two actions, $A_2 = \{b_1, b_2\}$. State 3 is a blocking state so that when $3$ is reached, the game ends.

The partial observations of the attacker are described as follows: The states $h_1,h_2,2$ are indistinguishable, and the states $1,4$ are indistinguishable. The actions chosen by both players cannot be observed by the attacker. The goal of P1 is to visit $h_1$ or $h_2$, \ie, to satisfy $\varphi = \Eventually (h_1 \lor h_2)$. The \ac{dfa} representing the goal is shown in Fig.~\ref{fig:DFA}. 

\begin{figure}[tp!]

\centering
\begin{tikzpicture}[shorten >=1pt,node distance=2.5cm,on grid,auto]
    \node[state,initial](q_0)  {$q_0$};
    \node[state,accepting]          (q_1) [right=of q_0] {$q_1$};
    
    \path[->] (q_0) edge node {$h_1\lor h_2$} (q_1)
    (q_0) edge [loop above] node {$\neg h_1\land \neg h_2$} (q_0)
    (q_1) edge [loop above] node {$\top$} (q_1)
    ;
\end{tikzpicture}
\caption{The DFA for $\Eventually (h_1\lor h_2)$.}

\label{fig:DFA}
\end{figure}

\begin{figure}[tp!]
\centering
\begin{tikzpicture}[->,>=stealth',shorten >=1pt,auto,node distance=2.5cm, scale=0.45,transform shape]
	\node[state,ellipse] (1) [] {\Large $(0, q_0,\{(0,q_0))\}$};
	\node[state,ellipse,fill=red!20] (2) [left=1.25cm of 1] {\Large $(h_1, q_1, \{(h_1,q_1)\})$};
	\node[state,ellipse] (3) [right=1.25cm of 1] {\Large $(4,q_0,\{(4,q_0)\})$};
	\node[state,ellipse] (4) [below right=2cm of 1] {\Large $(h_2,q_1,  \{(2,q_0),(h_2,q_1)\})$};
	\node[state,ellipse] (5) [below left=2cm of 1] {\Large $(3,q_1, \{(3,0),(3,q_1)\})$};

	\path 
	(1) edge[above]   node {\Large $a_1$} (2)
	(1) edge[above]   node {\Large $a_3$} (3)
	(3) edge[right]   node {\Large $b_1, b_2$} (4)
	(4) edge[above]   node {\Large $a_1$} (5)
	;
\end{tikzpicture}
\caption{The belief-augmented game for illustrative example.}
\label{fig:belief-aug-game}
\end{figure}
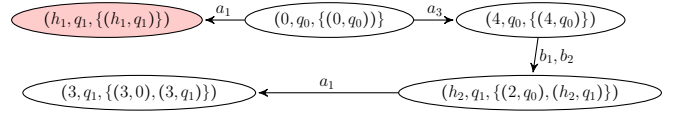

%\begin{tikzpicture}[shorten >=1pt,node distance=2cm,on grid,auto, scale=0.7]
%	\node[ellipse,draw](1)  {$0,q_0,\{(0,q_0)\}$};
%	\node[ellipse,draw, fill=red!20] (2) [right of=1, node distance = 5cm]{$h_1, q_1, \{(h_1,q_1)\}$};
%	\node[rectangle,draw] (3) [ below of=1]{$4,q_0,\{(4,q_0)\}$};
%	\node[ellipse,draw] (4) [right of= 3, node distance =5cm]{$h_2,q_1,  \{(2,q_0),(h_2,q_1)\})$};
%	\node[ellipse,draw] (5) [below right of = 3] {$3,q_1, \{(3,0),(3,q_1)\})$};
%	\path[->] (1) edge node {$a_1$} (2)
%	(1) edge node {$a_3$} (3)
%	(3) edge node {$b_1,b_2$} (4)
%	(4) edge node {$a_1$} (5);
%\end{tikzpicture}

A fragment of the belief-augmented game is drawn in Fig.~\ref{fig:belief-aug-game}. A state, for example, $(h_2,q_1,\{(2,q_0), (h_2,q_1)\})$ means the game state is $h_2$, the \ac{dfa} state is $q_1$ and the attacker's belief is $\{(2,q_0), (h_2,q_1)\}$. The action $a_2$ is omitted. It leads to a losing state because P2 can select action $b_2$ at $1$ and thus P1 cannot satisfy the objective. 

Based on the solution of the reachability game, the following result is obtained: 
At state 0, P1 has an opacity-enforcing winning strategy. 
The strategy results in two possible plays of the form $0 \xrightarrow{a_3} 4 \xrightarrow{b} h_2 \xrightarrow{a_1} 3$, where $b\in\{b_1,b_2\}$. The observations made during these plays are $0 \{1,4\} \{h_1,h_2,2\} 3$. By applying the inverse of the observation function to this observation, we can identify a play $\rho' = 0 \xrightarrow{a_2} 1 \xrightarrow{b_2} 2 \xrightarrow{a_1} 3$ such that $L(\rho') \not\in \mathcal{L}(\calA)$ and $\obs(\rho') = 0 \{1,4\} \{h_1,h_2,2\} 3 = \obs(\rho)$. Thus, this play $\rho$ is opaque and winning for P1. 

It is noted that in this game, if opacity is not enforced, P1 can satisfy his objective by taking action $a_1$ or $a_3$   from state $0$. 
 However, action $a_1$ can lead to a state outside of $\win_1^{opaque}$ in the belief-augmented game because by taking $a_1$, state $h_1$ will be reached. And two runs are possible: $0\xrightarrow{a_1}h_1\xrightarrow{b_1}2\xrightarrow{a_1}3$ or $0\xrightarrow{a_1} h_1\xrightarrow{b_2}h_2\xrightarrow{a_1}3$. The observations are $0,\{h_1,h_2,2\},\{h_1,h_2,2\}, 3 $ for both runs. Regardless of P2 taking action $b_1$ or $b_2$, the attacker can infer at least $h_1$ is visited and the formula $\varphi$ is satisfied.

\end{example}

\begin{example}
Consider the $5 \times 5$ grid-world in Fig.~\ref{fig:grid_world}. 
In this example, the aerial robot, P1, and the ground robot, P2, have $4$ movement actions $\{N, S, W, E\}$ to navigate the environment.
P1 can navigate the whole area, while P2 can only navigate the area in gray and it cannot enter the water area, the $9$ cells in the middle.
P1's temporal goal is described by the temporal logic %\ac{ltlf}~\cite{de2013linear} 
formula $\varphi = \Eventually (G_0 \land \Eventually (G_1 \lor G_2))$, %$\Eventually (G_0 \land \Eventually (G_1 \lor G_2))$, 
stating first visit $G_0$ and then $G_1$ or $G_2$.
If the robots move into the same cell, P2 can intercept P1, and the game ends.
%
%However, P2, the ground robot cannot enter the water area, the $9$ cells in the middle, while the aerial robot can enter any cells in the environment.
%
The game is turn-based, and at each step, only one robot makes a move.
Besides the two robots, there is an observer to whom the current state of the game is only partially observable. 
The observer obtains information from two kinds of  sensors in the environment,
\begin{itemize}
    \item[1.] The $5$ range sensors where each guards a row of the gridworld (when P1 enters a row, the observer knows at which row P1 is located, but it does not know in which column P1 is); and 
    \item[2.] The cross-shaped range sensor is shown in Figure~\ref{fig:grid_world}b, which is carried by robot P2. Once robot P1 enters the range of the sensor, the observer will receive the exact location of P1.
\end{itemize}
The goal of P2 is to intercept P1 if it can or reveal P1's secret to the observer if it cannot prevent P1 from winning the game.
%
%In this specific example, starting from the positions shown for the two robots in the figure, P1 has a opacity enforcing-sure winning strategy if it starts the game first. 

\begin{figure}[htp!]
\includegraphics[width=\linewidth]{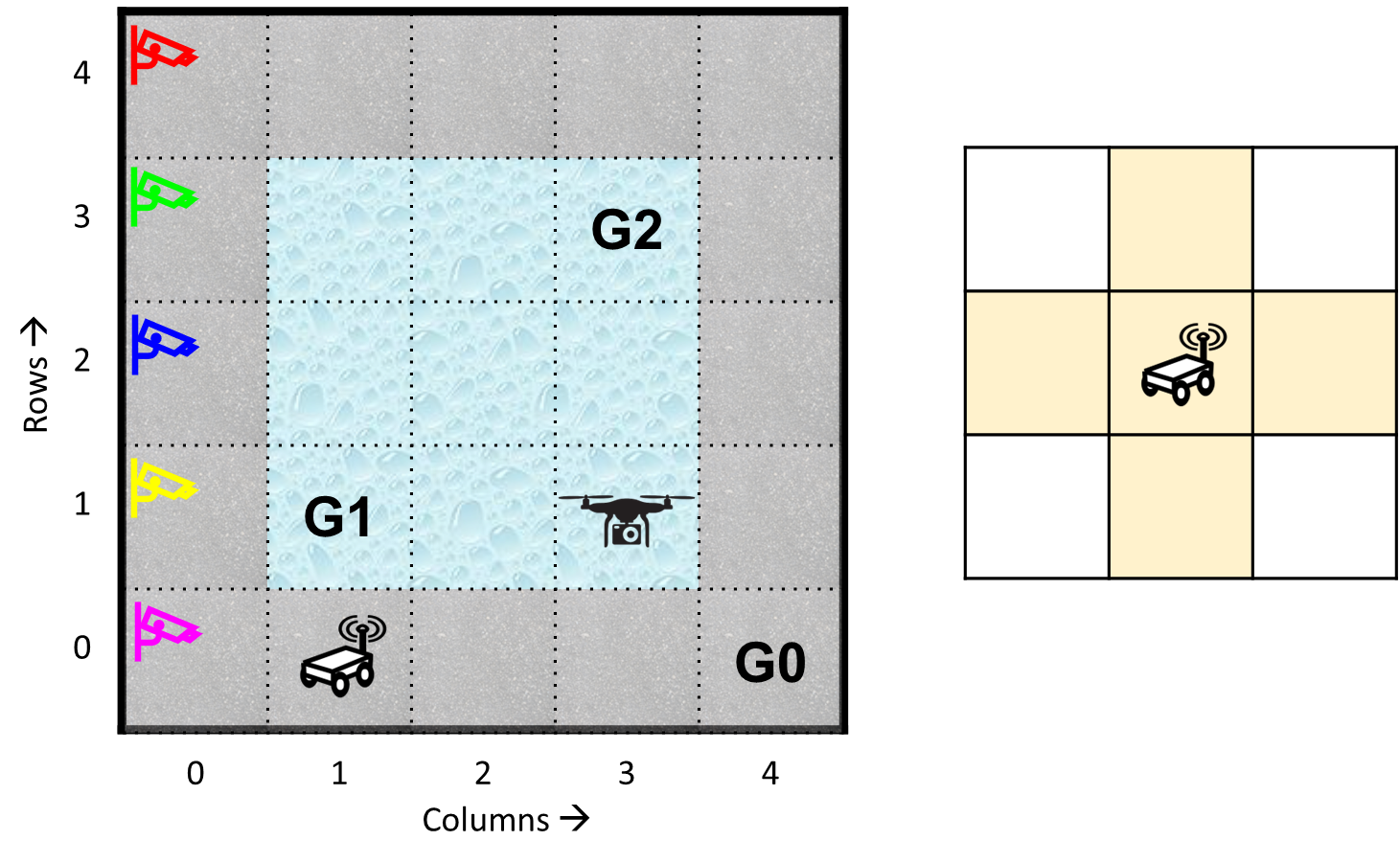}
\centering
\caption{\textbf{(left)} A robot motion planning problem motivating our opacity-enforcing temporal planning. The aerial robot's temporal goal is to first collect a sample at $G_0$ and then collect a sample at either $G_1$ or $G_2$. Five range sensors guard the environment, where each tells the observer the P1 is in that row whenever P1 enters that row. The range sensors do not know the column at which P1 might be located. \textbf{(right)} The cross-shaped range sensor is carried by the ground robot.}
\label{fig:grid_world}
\end{figure}

%Consider again the robot motion planning example in Figure~\ref{fig:initial_state}.
 %
For this example, we considered $25$ different instances for the $25$ possible initial locations of the aerial robot, P1. In each instance, we fix the initial location of the ground robot, P2, at the bottom-left cell $(0,0)$ of the environment.
We used our algorithm to solve these $25$ instances of our problem.
Figure~\ref{fig:result} shows the results of our algorithm.
The cells that contain red circles show the P1's initial locations from which P2 has a winning strategy to prevent P1 from satisfying $\varphi$.
Cells with a yellow circle show P1's initial locations from which, if the game starts, P1 has a winning strategy to satisfy $\varphi$ but cannot enforce opacity.
The green circles are shown in the cells from which, if the game starts, P1 has an opacity-enforcing winning strategy. 
% By applying our method (from theorem \ref{thm1}), we can determine the existence of the opacity-enforcing winning strategy of P1 for the different initial positions of P1 (Fig.~\ref{fig:initial_state}). The cells with red, yellow, and green circles are winning states for robot P2, winning states without opacity for robot P1, and opaque-enforcing winning states for robot P1, respectively. 
Accordingly, the green and yellow cells are winning initial states for P1. But if P1 wants to enforce opacity, then only green initial states allow it to do so.

\begin{figure}
    \centering
    \includegraphics[scale=0.4]{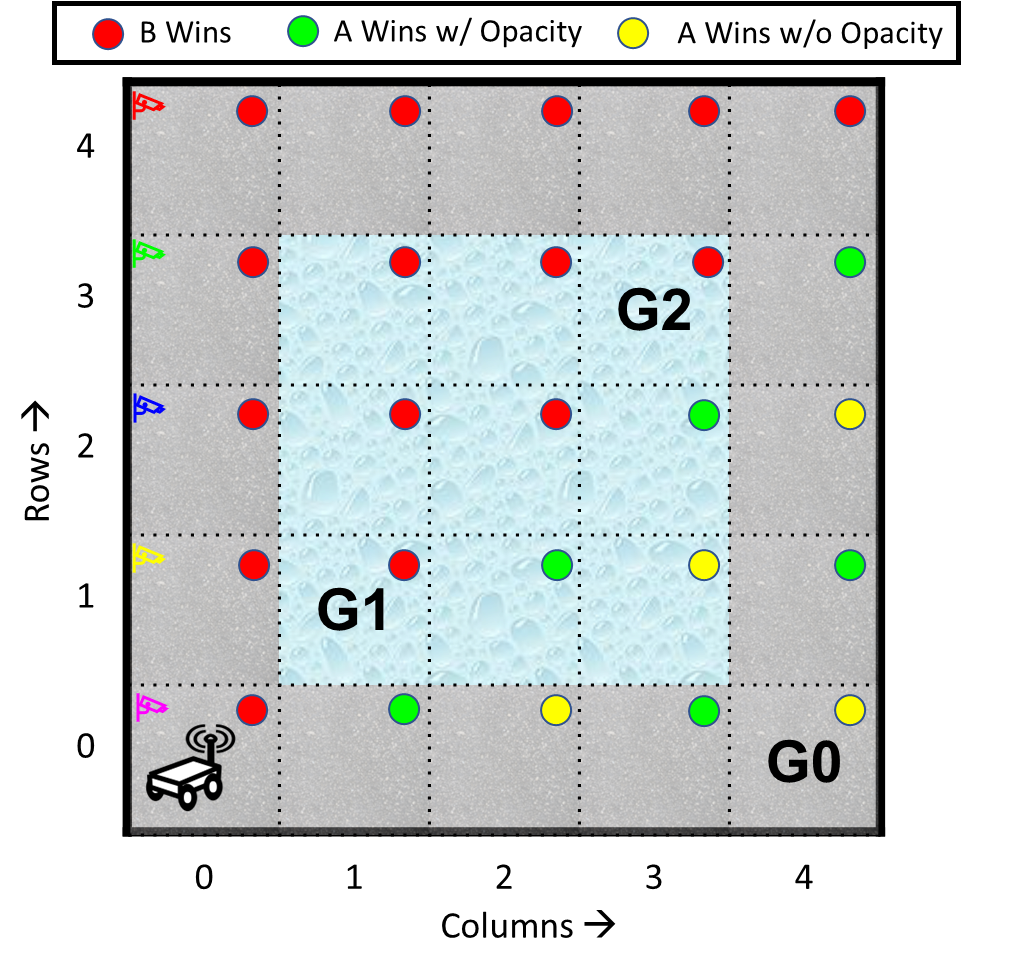}
    \caption{The results of our experiment for the $25$ instances of the grid-world example, where P2's initial state is fixed at $(0, 0)$ in all those instances, and P1's initial state varies across the instances.
    %Result of our experiment for $25$ instances of the gridworld example, where P2's initial state is fixed at (0,0) and, in each instance, P1 starts from one of the 25 cells.}
    }
    \label{fig:result}
\end{figure}

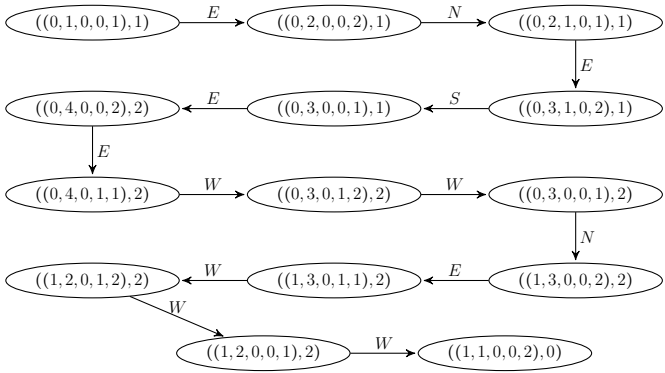
\begin{figure}[t]
	\begin{center}
		\begin{tikzpicture}[->,>=stealth',shorten >=1pt,auto,node distance=2.5cm, scale = 0.45,transform shape]
			\node[state,ellipse] (1) [] {\Large $((0, 1, 0, 0, 1), 1)$};
			\node[state,ellipse] (2) [right=2cm of 1] {\Large $((0, 2, 0, 0, 2), 1)$};
			\node[state,ellipse] (3) [right=2cm of 2] {\Large $((0, 2, 1, 0, 1), 1)$};
			\node[state,ellipse] (4) [below=1.5cm of 3] {\Large $((0, 3, 1, 0, 2), 1)$};
			\node[state,ellipse] (5) [left=2cm of 4] {\Large $((0, 3, 0, 0, 1), 1)$};
			\node[state,ellipse] (6) [left=2cm of 5] {\Large $((0, 4, 0, 0, 2), 2)$};
			\node[state,ellipse] (7) [below=1.5cm of 6] {\Large $((0, 4, 0, 1, 1), 2)$};
			\node[state,ellipse] (8) [right=2cm of 7] {\Large $((0, 3, 0, 1, 2), 2)$};
			\node[state,ellipse] (9) [right=2cm of 8] {\Large $((0, 3, 0, 0, 1), 2)$};
			\node[state,ellipse] (10) [below=1.5cm of 9] {\Large $((1, 3, 0, 0, 2), 2)$};
			\node[state,ellipse] (11) [left=2cm of 10] {\Large $((1, 3, 0, 1, 1), 2)$};
			\node[state,ellipse] (12) [left=2cm of 11] {\Large $((1, 2, 0, 1, 2), 2)$};
			\node[state,ellipse] (13) [below right=2cm of 12] {\Large $((1, 2, 0, 0, 1), 2)$};
			\node[state,ellipse] (14) [right=2cm of 13] {\Large $((1, 1, 0, 0, 2), 0)$};

			\path (1) edge[above]   node {\Large $E$} (2)
			(2) edge[above]   node {\Large $N$} (3)
			(3) edge[right]   node {\Large $E$} (4)
			(4) edge[above]   node {\Large $S$} (5)
			(5) edge[above]   node {\Large $E$} (6)
			(6) edge[right]   node {\Large $E$} (7)
			(7) edge[above]   node {\Large $W$} (8)
			(8) edge[above]   node {\Large $W$} (9)
			(9) edge[right]   node {\Large $N$} (10)
			(10) edge[above]   node {\Large $E$} (11)
			(11) edge[above]   node {\Large $W$} (12)
			(12) edge[above]   node {\Large $W$} (13)
			(13) edge[above]   node {\Large $W$} (14)
			;
		\end{tikzpicture}
	\end{center}
 \caption{Opacity-enforcing winning path given P1 starts from $(0,1)$.}
 \label{fig:winning-path}
\end{figure}
We have run $25$ experiments to solve the winning regions from $25$ different initial states on Intel (R) Core (TM) i7-5820K CPU @ 3.30GHz 3.30 GHz. The average time consumed for solving one game is 7.2 seconds. Fig.~\ref{fig:winning-path} shows the opacity-enforcing winning run when P1 starts from $(0,1)$ and P2's initial location is $(0, 0)$. We encode the state as $((p_{1r}, p_{1c}, p_{2r}, p_{2c}, turn), q)$ when\begin{itemize}
    \item P1's position is $(p_{1r}, p_{1c})$,
    \item P2's position is $(p_{2r}, p_{2c})$,
    \item $turn = 1,2$ indicates the turn of the game (either P1's turn or P2's turn), 
    \item $q$ is the state of DFA. The accepting states $F=\{0\}$.
\end{itemize}
In the last state of the path, the belief state is $\{(1, 1, 0, 0, 2), 0), ((1, 3, 0, 0, 2), 2)\}$ that contains a state with $q = 2$. Since $q = 2$ is not in $F$, it is an opacity-enforcing winning play for P1. %However, when the robot P1 starts from $(2,0)$, it can win but cannot enforce opacity.  This is if it takes action 'E', then P2 can take action 'E' and P3 observes P1 is not in P2's sensor range, P3 knows P1's position must be at $(3,0)$. If it takes action 'N' from $(2,0)$, then the row sensor 1 will inform P3 that the robot is in $(3,1)$. P2 just needs to follow P1 to ensure if P1 reaches $G_0$, P3 knows precisely where P1 is. 
\end{example}

\section{Conclusion and Future Work}
In this work, we formulate and solve the control synthesis problem for an agent to satisfy its temporal goal while enforcing opacity against its uncontrollable environment player, who colludes with a passive observer. We prove that in a turn-based deterministic game with perfect observation between P1 and P2, there does not exist a winning and positively opacity enforcing strategy for P1. However, it remains to be answered whether a winning and positively opacity enforcing strategy may exist for concurrent/stochastic games or games with partial observations, or whether existing game objectives can capture infinite-state opacity \cite{yinInfinitestepOpacityKstep2019}.
For other classes of games and different information structures between players and observers, quantitative notions of opacity such as probabilistic  opacity \cite{berardQuantifyingOpacity2015,sabooriCurrentStateOpacityFormulations2014} can be further investigated, leveraging results from stochastic games.
 
% %
% We provided a product automaton construction algorithm that augments states of the game with the belief of the observer about the game's status.
% %
% We evaluated our algorithm on a grid-based world case study.

 %The game-theoretic design of opacity-enforcing winning strategy may be extended to several future directions: How to enforce opacity in stochastic concurrent games against a colluded opponent and a passive observer? In the presence of multiple passive observers with different observation capabilities, whether a robust opacity-enforcing strategy exists? In light of recent work on diagonosibility against non-deterministic observers \cite{dongFaultDiagnosisDiscreteEvent2022}, it is interesting to know if robust game theory can be employed to synthesize  opacity-enforcing strategy  in the presence of multiple  observers with different observation capabilities. Moving from qualitative to quantitive analysis, it would be interesting to leverage stochastic game theory for   quantitative opacity-enforcing  control.

\bibliographystyle{IEEEtranS}
\bibliography{references}

\end{document}